\documentclass[lettersize,journal]{IEEEtran}
\usepackage{amsmath,amssymb,amsfonts}
\usepackage{arydshln}
\usepackage{graphicx}
\usepackage{textcomp}
\usepackage{wrapfig}

\usepackage{algorithm}
\usepackage{algpseudocode}
\usepackage{color}
\usepackage{xcolor}
\usepackage{array}
\usepackage[caption=false,font=footnotesize,labelfont=rm,textfont=sf]{subfig}
\usepackage[justification=centering]{caption}

\usepackage{stfloats}
\usepackage{url}
\usepackage{verbatim}
\usepackage{graphicx}
\usepackage{subfig}
\usepackage{epstopdf}
\usepackage{balance}
\usepackage{booktabs}

\usepackage{stmaryrd}

\newtheorem{definition}{\bf Definition}

\newtheorem{theorem}{\bf Theorem}

\usepackage[numbers,sort&compress]{natbib}

\def\BibTeX{{\rm B\kern-.05em{\sc i\kern-.025em b}\kern-.08em
    T\kern-.1667em\lower.7ex\hbox{E}\kern-.125emX}}
\usepackage{balance}

\begin{document}
\title{Spatial Angular Pseudo-Derivative Search Algorithm: A Real-Time Single-Snapshot Super-Resolution Sparse DOA Scheme for Automotive Radar}
\author{Longxin Bai, Jingchao Zhang, and Liyan Qiao
\thanks{
	This work has been submitted to the IEEE for possible publication. Copyright may be transferred without notice, after which this version may no longer be accessible.
	
	This work is supported by National Natural Science Foundation of China
	under Grant 61701138, the Natural Science Foundation of Heilong jiang
	Province of China under Grant LH2022F019, Hei Long Jiang Postdoctoral
	Foundation under Grant LBH-Z16087 and the Fundamental Research Funds
	for the Central Universities under Grant HIT.NSRIF202339. (Corresponding author: Jingchao Zhang)
	
L. Bai, J. Zhang, and L. Qiao are with School of Electronics and Information Engineering, Harbin Institute of Technology, Harbin 150001, China.(e-mail: bailongxin@stu.hit.edu.cn, zhangjingchao@hit.edu.cn, qiaoliyan@hit.edu.cn)} }


\maketitle

\begin{abstract}
Accurate, high-resolution, and real-time DOA estimation plays a crucial role in automotive radar perception. While sparse signal recovery techniques offer super-resolution and high-precision estimation, their prohibitive computational complexity remains a primary bottleneck for practical deployment. This paper proposes a sparse DOA estimation scheme specifically tailored for the stringent requirements of automotive radar such as limited computational resources, restricted array apertures, and single-snapshot constraints. By leveraging the spatial angular pseudo-derivative (SAPD) property of the sparse DOA solutions and incorporating this property as a constraint into an $\ell_0$-norm minimization problem, the proposed formulation transforms unordered exhaustive verification of candidate solutions into an ordered search process for sparse DOA estimation. Thus, the associated solver, called the SAPD Search algorithm, naturally transforms the high-dimensional optimization task into an efficient grid-search scheme. The SAPD search algorithm circumvents high-order matrix inversions and computationally intensive iterations. We also provide an analysis of the computational complexity of the proposed algorithm. Numerical simulations and experimental validation demonstrate that the SAPD Search algorithm achieves a superior balance of millisecond-level computational efficiency, high precision, and super-resolution, making it highly suitable for next-generation automotive radar applications.
\end{abstract}

\begin{IEEEkeywords}
	DOA estimation, Sparse Signal Recovery, Spatial Angular Pseudo-Derivative, Single Snapshot, Computation Efficiency.
\end{IEEEkeywords}

\section{INTRODUCTION}
While super-resolution direction-of-arrival (DOA) estimation has been extensively studied for over four decades \cite{96-Twodecades}, \cite{23-Twodecades}, simultaneously achieving real-time operation and high estimation accuracy remains a formidable challenge under stringent constraints such as limited computational resources, restricted array apertures, and the single-snapshot constraint. Notably, this technology is pivotal for automotive millimeter-wave (mmWave) radars, where it enables critical perception tasks within autonomous driving systems and robotic environments \cite{JSTSP-4D-Sparsity}, \cite{TITS-4D-Cloud-Points}, \cite{Radar_Localization}, \cite{RadarSLAM}, \cite{Radar_Point_Clouds_Generation}. Therefore, addressing these challenges to enhance the angular estimation performance of mmWave radars is essential to meet the requirements \cite{Radar_Automotive} of next-generation autonomous driving.

In autonomous driving scenarios, DOA estimation must be performed for all Range–Doppler (RD) cells identified by Constant False Alarm Rate (CFAR) detection within the time interval between two consecutive radar frames \cite{Blind_Array_Calibration}, \cite{Scater_to_Point_Clouds}, \cite{GPU_Real_Time_DOA}. Failure to complete the estimation within the available time budget may lead to the loss of target information, degrade point-cloud quality, and consequently affect the reliability of subsequent radar processing. Considering the computational overhead of downstream radar processing following point-cloud generation, computational efficiency becomes a primary prerequisite for the practical deployment of super-resolution DOA algorithms. Meanwhile, the estimation accuracy and angular resolution of DOA algorithms are also key factors affecting point-cloud quality and the performance of subsequent radar processing. Therefore, a practical super-resolution DOA method should simultaneously achieve low computational complexity, high estimation accuracy, and superior angular resolution \cite{Radar_Automotive}.

Conventional beamforming methods can achieve real-time DOA estimation from a single snapshot, but their angular resolution is inherently limited. In contrast, subspace-based methods, such as MUSIC \cite{MUSIC} and ESPRIT \cite{ESPRIT}, provide super-resolution capability but typically rely on multiple snapshots to estimate the covariance matrix. As a result, their applicability is limited in highly dynamic radar scenarios \cite{Radar_Automotive}.

In contrast, sparse DOA estimation methods can achieve super-resolution estimation under the single-snapshot scenario and exhibit good robustness to correlated signals. As a result, they are widely regarded as one of the most promising DOA estimation frameworks for mobile-platform applications. However, their superior estimation performance is typically achieved at the cost of increased computational complexity \cite{23-Twodecades}.

When the estimated DOAs correspond to discrete points on a predefined spatial grid, such methods are generally referred to as on-grid methods. Starting from the formulation of the $\ell_0$-norm minimization problem \cite{Sparse_Approximate_exs}, numerous approaches have been developed based on convex relaxation \cite{l1svd} and nonconvex penalty functions \cite{CEL0_l1}, \cite{amir2021trimmed}. Although these methods can achieve high estimation accuracy and angular resolution, their solution procedures typically require repeated iterative optimization and matrix operations involving an overcomplete dictionary. Since the dimension of the dictionary is usually much larger than that of the array manifold, the resulting computational complexity remains high, making real-time implementation challenging. To reduce the computational burden, various studies have proposed more efficient optimization strategies \cite{FISTA}, \cite{Fast_LS_ADMM} or introduced weighted $\ell_1$ formulations \cite{IRL1_Wip} to accelerate convergence. However, such approaches generally fail to significantly reduce the overall computational cost associated with the overcomplete dictionary.

Furthermore, Orthogonal Matching Pursuit (OMP) \cite{OMP_Current} employs a greedy strategy, thereby avoiding the iterative optimization procedures required by convex and nonconvex formulations and significantly reducing computational complexity. However, in our simulations, its estimation accuracy and angular resolution are insufficient for the high-resolution DOA estimation task considered in this work.

Another representative class of methods is based on covariance fitting, such as SPICE \cite{SPICE} and IAA-APES \cite{IAA_APES}. SPICE \cite{SPICE}, \cite{MMSE_Framework} performs DOA estimation through covariance fitting and achieves high computational efficiency. However, its performance is highly dependent on the number of snapshots, and its angular resolution and estimation accuracy degrade significantly under the single-snapshot scenario. In contrast, IAA-APES iteratively fits the covariance matrix using a weighted least-squares criterion, enabling good stability, estimation accuracy, and angular resolution under single-snapshot conditions. Since it also avoids the iterative computations associated with overcomplete dictionaries, its computational complexity is generally lower than that of most convex and nonconvex optimization-based approaches. Nevertheless, IAA-APES still requires multiple iterations to complete the covariance fitting process, and each iteration involves a covariance matrix inversion. As a result, its computational cost remains considerable when processing a large number of RD cells in automotive radar applications.

In practice, the true DOAs rarely coincide exactly with the predefined spatial grid, giving rise to grid mismatch errors. In general compressed sensing problems, dictionary mismatch can be characterized using perturbation models \cite{Orig_Off_grid}, \cite{Joint_Sparse_Recovery}. For DOA estimation, many existing studies model the resulting off-grid bias using a first-order Taylor expansion \cite{OGSBI}.

Within the deterministic optimization framework, the off-grid bias is typically estimated jointly with the sparse coefficients, which can be achieved through approaches such as gradient descent \cite{Off-grid_Chen} or block-sparse recovery \cite{block_sparse}. Although these methods can alleviate the performance degradation caused by grid mismatch, they still rely on sparse optimization frameworks constructed from overcomplete dictionaries and require the additional estimation of off-grid bias, resulting in relatively high computational complexity.

Within the Sparse Bayesian Learning (SBL) framework, the off-grid bias can be jointly estimated with the sparse coefficients through Bayesian inference \cite{Cluster_OGSBI}, \cite{SBL_HCP}, \cite{OGSBI}, \cite{Hierarchical_SBI}. Although such methods generally provide excellent estimation performance, Bayesian inference itself is computationally intensive. To reduce computational complexity, various acceleration strategies have been proposed. However, these methods still fundamentally rely on sparse representation frameworks constructed from overcomplete dictionaries \cite{GE}, . Inspired by the off-grid modeling concept, Grid Evolution (GE) \cite{GE} and related methods further reduce part of the computational burden by dynamically refining the grid. Nevertheless, their underlying implementations still depend on the Root-OGSBI \cite{Root-OGSBI} framework, and thus the resulting complexity reduction remains limited.

Gridless DOA estimation methods \cite{Atom_min_orig}, \cite{Gridless_Current}, on the other hand, can theoretically eliminate the grid mismatch problem caused by discretization. However, their solution procedures often rely on semidefinite programming (SDP), resulting in even higher computational complexity. Although considerable efforts have been devoted to reducing the computational burden through improved SDP solvers, the resulting complexity reduction remains limited. To address this issue, methods such as VALSE \cite{VALSE} avoid SDP optimization by employing variational inference and heuristic search strategies, thereby significantly reducing computational complexity while maintaining favorable estimation performance. Nevertheless, for automotive radar applications involving real-time processing of a large number of RD cells, the computational cost remains relatively high.

In addition, methods based on GPU acceleration and deep learning can also achieve real-time single-snapshot DOA estimation \cite{GPU_Real_Time_DOA}. However, their performance improvements typically rely on additional computational resources. This work focuses on high-resolution DOA estimation under limited computational resources. Therefore, such hardware-dependent approaches are beyond the scope of this paper.

From the above analysis, it can be observed that the major computational burden of existing high-performance single-snapshot DOA estimation methods mainly arises from large-scale matrix operations. Therefore, reducing the scale of matrix computations during the iterative process and minimizing the use of high-dimensional matrix inversions are important directions for lowering computational complexity. However, most existing sparse DOA estimation methods are formulated from the perspective of $\ell_0$-norm minimization. Although this framework provides a general formulation for sparse recovery, it is essentially a generic sparse reconstruction model and does not fully exploit the physical structure inherent to the DOA estimation problem. Consequently, further reducing the computational complexity within this framework alone is often difficult. To achieve lower computational complexity while preserving super-resolution capability and facilitating subsequent refinement of estimation accuracy, it is necessary to introduce additional structural constraints into the original sparse recovery model.

To avoid the high computational complexity introduced by the involvement of large-scale matrices, such as full overcomplete dictionary and covariance-like matrices, in sparse DOA estimation, thereby achieving low-complexity, high-accuracy, and super-resolution DOA estimation while bridging the gap between sparse signal recovery theory and practical applications, this paper proposes a sparse DOA estimation method for automotive millimeter-wave radar systems.

\textbf{1) Spatial Angular Pseudo-Derivative (SAPD) Property:} By analyzing the relationship between the sparse DOA solution and the spatial discrete grid, the SAPD property of the sparse DOA solution is characterized, and the corresponding SAPD constraints are further proposed. This property reveals the intrinsic relationship between the local minima of the corresponding subproblem and the spatial discrete grid under a fixed sparsity level, providing the design basis for the subsequent Sparse DOA formulation and corresponding efficient search strategy.

\textbf{2) SAPD-Constrained Sparse DOA Optimization Function:} The SAPD Constraints are introduced into the conventional $\ell_0$-norm sparse recovery model as structural constraints, leading to a new Sparse DOA optimization function. As a result, the spatial structural information characterized by the SAPD Property is incorporated into the proposed Sparse DOA formulation. Although the proposed optimization function still adopts the $\ell_0$-norm minimization objective, it transforms the original combinatorial optimization problem, which requires unordered exhaustive verification of candidate solution combinations, into an ordered search problem for sparse DOA estimation. Consequently, the proposed formulation leads to an efficient search strategy that avoids large-scale matrix computations.

\textbf{3) Efficient SAPD Search Algorithm:} Based on the proposed Sparse DOA formulation and the SAPD Property of the sparse DOA solution, an efficient SAPD Search Algorithm is developed. By exploiting the information provided by the beamforming spatial spectrum together with the SAPD Property, the proposed algorithm solves the established Sparse DOA formulation through a grid search strategy. Consequently, it avoids the use of an overcomplete dictionary, large-scale matrix inversions, and computationally intensive iterative optimization, thereby significantly reducing the computational complexity of DOA estimation. Meanwhile, the proposed method does not require prior knowledge of the number of incident sources.

The remainder of this paper is organized as follows. Section \ref{Sec2} introduces the sparse DOA estimation problem and derives the expression of the off-grid bias under a given angular grid. Section \ref{Sec3} presents the concept of the Spatial Angular Pseudo-Derivative (SAPD) and formulates a new sparse DOA optimization problem based on SAPD. Section \ref{Sec4} describes the proposed SAPD Search Algorithm in detail. Section \ref{Sec6} presents numerical simulations and experimental results. Finally, Section \ref{Sec7} concludes the paper.

Notations: $\| \cdot \|_2$, $\| \cdot \|_1$ denote the $\ell_2$-norm and $\ell_1$-norm, respectively. $(\cdot)^T$, $(\cdot)^H$, $(\cdot)^{-1}$ and $(\cdot)^{\dagger}$ represent the transposition, Hermitian transposition, inversion, pseudo-inversion, respectively. $\lceil \cdot \rceil$ and $\operatorname{round}(\cdot)$ denote the ceiling and rounding operators, respectively. $\mathbb{E}(\cdot)$ denotes the expectation operator. $j = \sqrt{1}$. $\mathbf{1}_N$ is the  $N \times 1$ all-ones vector. For any $n \in \mathbb{Z}_{+}$, we denote $[n] \triangleq \{ 1, \dots n \}$.  $\operatorname{sgn}(\cdot)$ denotes the signum function. The notation $\vert \cdot \vert$ denotes the absolute value when applied to a scalar, the element-wise absolute value when applied to a vector, and the cardinality when applied to a set. The operator $\operatorname{sgn}(\cdot)$ denotes the signum function. When applied to a vector, it operates element-wise. For an index set $\mathcal{G}=\{ g_1,g_2,\dots,g_d \}$, the subvector of $\boldsymbol{x}$ associated with $\mathcal{G}$ is defined as
\begin{align}
	\boldsymbol{x}_\mathcal{G} &= [x_{g_1}, x_{g_2}, \dots, x_{g_d}]^T, \ g_i \in \mathcal{G}.
\end{align}

\section{Sparse DOA Model and Off-Grid Bias Formulation} \label{Sec2}

Consider a frequency-modulated continuous-wave (FMCW) radar equipped with a uniform linear array (ULA) consisting of $M$ receive antennas with inter-element spacing $d_a =\lambda/2$, where $\lambda$ denotes the carrier wavelength. The array receives echoes from $K$ far-field sources impinging from directions $\boldsymbol{\vartheta}^{*}=[\theta_1^{*},\theta_2^{*},\cdots,\theta_K^{*}]$, where the superscript ${\theta}_k^{*}$ denotes the $k$-th true direction-of-arrival (DOA).

Under the narrowband far-field assumption, the steering vector corresponding to the $k$th source is given by
\begin{equation}
	\boldsymbol{a}(\theta_k^*) = [1, e^{-j\pi\sin{(\theta_k^*)}}, \cdots, e^{-j\pi(M-1)  \sin{(\theta_k^*)}}]^T,  \label{steering_vector}
\end{equation}
where $\theta_k^* \in \boldsymbol{\vartheta}^*$, for $k = 1, \dots, K$. Assuming an ideal array, the received data vector can be expressed as
\begin{equation}
	\boldsymbol{y} = \sum_{k=1}^{K} {\boldsymbol{a}}(\theta_{k}^*) s_k +{\boldsymbol{n}} 
	= \mathbf{A}(\boldsymbol{\vartheta}^*)\boldsymbol{s} + \boldsymbol{n},
\end{equation}
where $\mathbf{A}(\boldsymbol{\vartheta}^*) = [\boldsymbol{a}(\theta_1^*), \boldsymbol{a}(\theta_2^*), \cdots, \boldsymbol{a}(\theta_K^*)] \in \mathbb{C}^{M \times K}$  is the array manifold matrix, $\boldsymbol{s} = [s_1, s_2, \dots, s_K]^T \in \mathbb{C}^{K}$ denotes the signal vector, and $\boldsymbol{n} \in \mathbb{C}^{M}$ represents the additive white Gaussian noise vector. 

Subsequently, a uniform angular grid is constructed as $\boldsymbol{\vartheta} = [\theta_1, \theta_2, \dots, \theta_G]$ with grid interval $\Delta\theta$. Based on this discretized grid, the overcomplete array manifold matrix is defined as
$\mathbf{A}(\boldsymbol{\vartheta}) = [\boldsymbol{a}(\theta_1), \boldsymbol{a}(\theta_2), \dots, \boldsymbol{a}(\theta_G)] \in \mathbb{C}^{M \times G}$. Accordingly, the DOA observation model can be rewritten as $\boldsymbol{y} = \mathbf{A}(\boldsymbol{\vartheta})\boldsymbol{x} + \boldsymbol{n}$, where $\boldsymbol{x} \in \mathbb{C}^{G}$ denotes a sparse representation vector. Considering the presence of noise, the sparse signal recovery problem can be formulated as the following $\ell_0$-norm minimization problem
\begin{equation}
	\hat{\boldsymbol{x}} = \mathop{\arg\min}\limits_{\boldsymbol{x}} \| \boldsymbol{x} \|_0  \quad  \text{s.t.} \quad \| \boldsymbol{y} - \mathbf{A}(\boldsymbol{\vartheta}) \boldsymbol{x} \|_2^2 \le \epsilon, \label{l0_obj}
\end{equation}
where $\epsilon_e$ accounts for recovery tolerance for exhaustive search. Problem \eqref{l0_obj} is a combinatorial optimization problem which can be solved via exhaustive search \cite{Sparse_Approximate_exs}, \cite{Mathematical_Compressive}.

Although the overall computational cost of exhaustive search is prohibitive, its solving procedure reveals an important characteristic. For a given index set, the verification process only requires obtaining the corresponding recovered coefficients through least-squares (LS) estimation and comparing the recovery error with the predefined recovery tolerance to determine whether the candidate solution satisfies the constraint of \eqref{l0_obj}.

For the DOA estimation problem, since a ULA with $M$ sensors can resolve at most $M-1$ sources, let $\Lambda_s = \{ \mathcal{G}_i | \mathcal{G}_i \subset [G], \ 1 \leq \vert \mathcal{G}_i \vert \leq M - 1 \}$ denote the set of all candidate support sets. The elements of $\mathcal{G}_i$ are denoted by $\mathcal{G}_i = \{ g_1, g_2, \dots, g_d  \} \subset [G]$, where $d$ is the cardinate of $\mathcal{G}_i$. For a given candidate support set $\mathcal{G}_i$, let $\mathcal{G}_i^c = [G] - \mathcal{G}_i$ and  $\boldsymbol{x}_{\mathcal{G}_i^c} = \mathbf{0}$. The corresponding coefficient vector $\boldsymbol{x}_{\mathcal{G}_i}$ is obtained by the least-squares estimation \cite{Mathematical_Compressive}, given by
\begin{align}
	\boldsymbol{x}_{\mathcal{G}_i} = \boldsymbol{x}(\boldsymbol{\vartheta}_{\mathcal{G}_i}) = (\mathbf{A}(\boldsymbol{\vartheta}_{\mathcal{G}_i})^H\mathbf{A}(\boldsymbol{\vartheta}_{\mathcal{G}_i}))^{-1} \mathbf{A}(\boldsymbol{\vartheta}_{\mathcal{G}_i})^H \boldsymbol{y}.   \label{x_theta_function}
\end{align}
Equation \eqref{x_theta_function} expresses $\boldsymbol{x}_{\mathcal{G}_i}$ as a function of the corresponding angle set $\boldsymbol{\vartheta}_{\mathcal{G}_i}$. $\| \boldsymbol{y} - \mathbf{A}(\boldsymbol{\vartheta}_{\mathcal{G}_i}) \boldsymbol{x}_{\mathcal{G}_i} \|_2^2 \le \epsilon$ the candidate support set is considered to satisfy the recovery error constraint and is retained for the subsequent final solution selection.

Through the above procedure, it can be observed that the matrix involved in Equation \eqref{x_theta_function} is only $\mathbf{A}(\boldsymbol{\vartheta}_{\mathcal{G}_i}) \in \mathbb{C}^{M \times d}$, where the computational dimension is determined by the size of the candidate support set $d$ rather than the total number of grid points $G$, with $d\ll G$. The dominant computational cost of Equation \eqref{x_theta_function} is approximately $O(Md^2)$. Therefore, the verification process in exhaustive search avoids direct operations involving the full overcomplete dictionary and covariance-like matrix.

However, although the above verification process has a relatively low computational cost, exhaustive search still cannot satisfy practical requirements. This is because the candidate support set $\Lambda_s$ is usually unordered, and the algorithm needs to perform least-squares estimation and recovery error verification for a large number of possible candidates individually, resulting in a significant increase in the overall computational burden.

Therefore, to exploit the computational advantage provided by the low-dimensional verification process, a new characteristic of the sparse DOA solution is required, which enables a structured search strategy and efficient verification. In this paper, this characteristic is represented by the mapping relationship between the solution of the DOA estimation problem and the spatial discrete grid points, and is defined as the solution Spatial Angular Pseudo-Derivative (SAPD) Property. Based on this property, a new objective function is further formulated, and the corresponding solving method is developed, which will be described in detail in the following sections.

\section{The concept of SAPD and SAPD-Constrained Sparse DOA Formulation} \label{Sec3}

This section first provides a detailed discussion of the Spatial Angular Pseudo-Derivative (SAPD) Property of sparse DOA solutions. Subsequently, the SAPD Constraints derived from this property are introduced and incorporated into the sparse DOA estimation objective function, leading to the proposed SAPD-Constrained Sparse DOA Formulation.

\subsection{The Concept of SAPD Property}

We first consider the case of $|\mathcal{G}|=K$ with $\mathcal{G}\subset [G]$, i.e., the cardinality of the given index set is equal to the number of incident sources. Without loss of generality, the true incident angles are assumed to satisfy $\theta_1^* < \theta_2^* < \cdots < \theta_K^*$. Accordingly, the indices of the grid points in $\boldsymbol{\vartheta}_{\mathcal{G}}$ are numbered in ascending order as $g_1< g_2 < \cdots < g_K$. For a given index set $\mathcal{G}$, the mapping relationship between $\mathcal{G}$ and the set of true incident angles $\boldsymbol{\theta}^{*}$ is defined as
\begin{equation}
	g_k = \mathop{\arg\min}\limits_{g_i \in \mathcal{G} } \vert \theta_{k}^* - \theta_{g_i} \vert.
\end{equation}
The above mapping indicates that, for each true incident angle $\theta_k^*$, the corresponding grid point $\theta_{g_k}$ within the given index set $\mathcal{G}$ is the nearest grid point to $\theta_k^*$. Based on this mapping relationship, the $k$-th true incident angle can be expressed as
\begin{equation}
	\theta_k^* = \theta_{g_k} + (\theta_{k}^* -  \theta_{g_k}) = \theta_{g_k} + \beta_k, \ \theta_{g_k} \in \boldsymbol{\vartheta}_{\mathcal{G}} \subset \boldsymbol{\vartheta}. \label{bias_model}
\end{equation}  
where $\beta_k=\theta_k^*-\theta_{g_k}$ denotes the bias parameter of the true incident angle $\theta_k^*$ with respect to its corresponding grid point $\theta_{g_k}$.

\textbf{Remark:} It should be noted that the bias parameter defined in this paper is fundamentally different from the conventional off-grid parameter. The conventional off-grid parameter is usually defined based on the complete spatial discretized grid $\boldsymbol{\vartheta}$, where the grid point closest to the true incident angle is first identified over the entire grid, and the offset between the true angle and the corresponding grid point is then regarded as the interpolation parameter. However, for a given index set $\mathcal{G}\subset[G]$, the associated grid point is not necessarily the nearest grid point to the true incident angle within the complete discretized grid $\boldsymbol{\vartheta}$. Therefore, in this paper, the bias parameter is introduced to characterize the angular offset between the true incident angle and the corresponding grid point selected from the current index set $\mathcal{G}$, instead of adopting the conventional definition of the off-grid parameter. Therefore, according to the definition of the bias parameter, the conventional off-grid parameter can be regarded as a special case of the bias parameter under a specific grid selection condition.

According to the definition of the bias parameter in \eqref{bias_model}, for a target angle $\theta_k^*$, the magnitude of the bias $|\beta_k|$ is determined by the angular distance between the grid point and $\theta_k^*$. Furthermore,
\begin{align}
	\theta_{g_k} < \theta_k^* \ \Rightarrow \ \beta_{k} > 0 , \quad \theta_{g_k} > \theta_k^* \ \Rightarrow \  \beta_{k} < 0. \label{SAPD_PL}
\end{align}
Therefore, both the magnitude and the sign of the bias vary around the target angle. Specifically, the magnitude of the bias reaches its minimum near the true incident angle, changes its sign when crossing the target angle, and increases with the angular distance from the true incident angle. This behavior exhibits a first-order derivative-like characteristic around the stationary point of a quadratic function. Motivated by this derivative-like characteristic in the spatial angular domain, we refer to this property of the sparse DOA solution as the \textbf{Spatial Angular Pseudo-Derivative (SAPD)} property. 

For convenience, the bias parameter vector associated with a given angle set $\boldsymbol{\vartheta}_{\mathcal{G}}$ is denoted as $\boldsymbol{\beta}(\boldsymbol{\vartheta}_{\mathcal{G}}) = \boldsymbol{\beta}_{\mathcal{G}} = [\beta_1, \dots, \beta_K]^T$. Since the true incident angles $\boldsymbol{\vartheta}^{*}$ do not necessarily lie on the discretized spatial grid $\boldsymbol{\vartheta}$, let $\mathcal{G}^{*}$ denote the index set of grid points closest to $\boldsymbol{\vartheta}^{*}$ within the complete grid, and the corresponding grid angle set is denoted as $\boldsymbol{\vartheta}_{\mathcal{G}^{*}}$. Thus, $\theta_{g_k^*} - \Delta \theta < \theta_k^*$ and $\theta_{g_k^*} + \Delta \theta > \theta_k^*$. According to the aforementioned sign variation characteristic of the bias parameter, the bias vector around the nearest grid points satisfies
\begin{equation}
	\boldsymbol{\beta}(\boldsymbol{\vartheta}_{\mathcal{G}^*} - \Delta \boldsymbol{\vartheta}) > \mathbf{0}, \ \boldsymbol{\beta}(\boldsymbol{\vartheta}_{\mathcal{G}^*} + \Delta \boldsymbol{\vartheta}) < \mathbf{0},   \label{true_angle_property}
\end{equation}
where $\Delta \boldsymbol{\vartheta} = \Delta \theta \cdot \mathbf{1}_K$. A schematic illustration of this property is shown in Fig. \ref{spatial_angular_pseudo-derivative_schematic}.

Therefore, from the definition of the bias parameter, the SAPD Property naturally arises as a necessary property of sparse DOA solutions on the discretized spatial grid.

\begin{figure}[!t]
	\centering
	\includegraphics[width=3.2in]{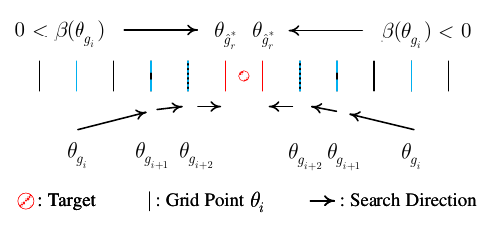}
	\caption{\centering{Graphical illustration of the SAPD property}}
	\label{spatial_angular_pseudo-derivative_schematic}
\end{figure}

For a given candidate index set $\mathcal{G}\subset [G]$ with cardinality $d$, the sparsity level is determined by the candidate solution and does not necessarily equal the actual number of incident sources $K$. To facilitate the subsequent discussion and derivation, we define the following function
\begin{align}
	\varepsilon(\boldsymbol{\vartheta}_\mathcal{G}) = \| \boldsymbol{y} - \mathbf{A}(\boldsymbol{\vartheta}_\mathcal{G})\boldsymbol{x}(\boldsymbol{\vartheta}_\mathcal{G}) \|_2 \label{recovery_error}, 
\end{align}
We first consider the sparse DOA estimation subproblem under a fixed sparsity level $d$.

\begin{theorem}[Existence of Local Minimizers]
	For any fixed cardinality $d$, the corresponding sparse DOA estimation subproblem
	\begin{equation}
		\min_{\boldsymbol{\vartheta}_{\mathcal{G}}}
		\ \varepsilon(\boldsymbol{\vartheta}_{\mathcal{G}}),
		\quad \text{s.t.} \ |\mathcal{G}| = d, \label{subspace_K_SD}
	\end{equation}
	possesses local minimizers.   \label{D_Local_Minima}
\end{theorem}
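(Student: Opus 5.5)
We argue on the discrete feasible set and show that a global minimizer exists; every global minimizer is in particular a local one. For fixed $d$ the feasible set of \eqref{subspace_K_SD} is
\begin{equation*}
\Lambda_d=\{\mathcal{G}\subset[G] : |\mathcal{G}|=d\},
\end{equation*}
equivalently the collection of angle sets $\boldsymbol{\vartheta}_{\mathcal{G}}$ drawn from the uniform grid $\boldsymbol{\vartheta}$. This set is finite, with $|\Lambda_d|=\binom{G}{d}$, and it is nonempty for $1\le d\le M-1\le G$. On this set the objective $\varepsilon(\boldsymbol{\vartheta}_{\mathcal{G}})$ in \eqref{recovery_error} is a real-valued, nonnegative function. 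A real function on a finite nonempty set attains its minimum, so a global minimizer exists, and the theorem follows once the appropriate notion of ``local'' is fixed.

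The first step is to check that $\varepsilon$ is well defined on every $\mathcal{G}\in\Lambda_d$. The steering vectors in \eqref{steering_vector} at distinct grid angles have distinct spatial frequencies $\sin\theta_{g_i}$, assuming the grid lies in $(-90^\circ,90^\circ)$ so that $\sin$ is injective there. The matrix $\mathbf{A}(\boldsymbol{\vartheta}_{\mathcal{G}})$ is then an $M\times d$ Vandermonde matrix with distinct nodes $e^{-j\pi\sin\theta_{g_i}}$. Since $d\le M-1<M$, it has full column rank, so the inverse in \eqref{x_theta_function} exists. Even without that restriction, one can replace the inverse by the pseudo-inverse: $\varepsilon(\boldsymbol{\vartheta}_{\mathcal{G}})$ is then the norm of the projection of $\boldsymbol{y}$ onto the orthogonal complement of the column space, which is always defined.

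The second step is to make ``local minimizer'' precise, in line with the SAPD picture of Section \ref{Sec3}. Call $\mathcal{G}'$ a neighbour of $\mathcal{G}$ if it is obtained by shifting a single index $g_k$ to $g_k\pm1$ while keeping $d$ distinct indices. $\mathcal{G}$ is a local minimizer if $\varepsilon(\boldsymbol{\vartheta}_{\mathcal{G}})\le\varepsilon(\boldsymbol{\vartheta}_{\mathcal{G}'})$ for all its neighbours; this matches the shifts $\boldsymbol{\vartheta}_{\mathcal{G}}\pm\Delta\theta$ used in \eqref{true_angle_property}. Under this definition the global minimizer from the first paragraph is a local minimizer. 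One can also argue constructively: start from any $\mathcal{G}$ and repeatedly move to a strictly better neighbour. Since $\varepsilon$ strictly decreases and $\Lambda_d$ is finite, this descent cannot cycle and must stop at a local minimizer. This version is worth including because it is the ordered search that the later SAPD algorithm exploits.

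The step I expect to be delicate is the choice of topology. If the theorem is read with $\boldsymbol{\vartheta}_{\mathcal{G}}$ ranging continuously over $d$-tuples of angles, the argument must change. The function $\varepsilon^2(\boldsymbol{\theta})=\|\boldsymbol{y}\|^2-\boldsymbol{y}^H\mathbf{P}_{\mathbf{A}(\boldsymbol{\theta})}\boldsymbol{y}$ is continuous wherever the angles are distinct. I would restrict to the compact set $[\theta_{\min},\theta_{\max}]^d$ and handle coalescing angles, where rank drops, by noting that the projector's range still contains the span for the merged angles. Then $\varepsilon$ is lower semicontinuous, or it can be extended by its $\liminf$, and Weierstrass gives a minimizer. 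Since the paper works on the discrete grid, I would present the finite-set argument as the proof and mention the continuous case only as a remark.
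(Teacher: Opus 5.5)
Your proof is correct, but it takes a different route from the paper's. You stay on the discrete feasible set $\{\mathcal{G}\subset[G]:|\mathcal{G}|=d\}$ and check that $\varepsilon$ is well defined there. Finiteness then gives a global, hence local, minimizer, and you make ``local'' precise through a grid-neighbour structure with a descent procedure that must terminate.

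The paper instead lifts the subproblem to the coefficient space $\boldsymbol{x}\in\mathbb{C}^G$. It replaces the cardinality constraint by the trimmed-lasso penalty, $\tfrac12\|\boldsymbol{y}-\mathbf{A}(\boldsymbol{\vartheta})\boldsymbol{x}\|_2^2+\lambda\tau_d(\boldsymbol{x})$, where $\tau_d$ is the sum of the $G-d$ smallest magnitudes. It then invokes the exact-penalty threshold $\lambda\ge\|\boldsymbol{y}\|_2\max_i\|\boldsymbol{a}(\theta_i)\|_2$ to conclude that local minimizers of this continuous objective are $d$-sparse.

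What each route buys:
\begin{itemize}
\item The paper's route ties the subproblem to the trimmed-lasso landscape and explains the ``multitude of $d$-sparse local minima'' remark. Above the threshold, the least-squares fit on any $d$-support with nonzero coefficients is a local minimizer. Moving mass off the support costs $\lambda|\delta_i|$, while the data term gains at most $\|\boldsymbol{y}\|_2\|\boldsymbol{a}(\theta_i)\|_2|\delta_i|$.
\item As written, however, the paper's argument asserts sparsity of local minimizers rather than exhibiting one, and it leans on an external result.
\item Your argument is elementary and self-contained. Your notion of locality (integer grid moves) is also closer to what SAPD-Guided Search actually does.
\item The price of your reading is that the statement becomes almost vacuous, and ``local'' depends on the neighbourhood you choose. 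The SAPD constraint shifts all indices by $\pm\Delta\theta$ at once, whereas you shift one at a time. This is harmless here, since a global minimizer is local for any neighbourhood.
\end{itemize}

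Two side points.

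First, the reason to exclude $\pm 90^\circ$ is not injectivity of $\sin$, which holds on the closed interval as well. The reason is aliasing: $e^{-j\pi\sin\theta}=-1$ at both endpoints, so $\boldsymbol{a}(-90^\circ)=\boldsymbol{a}(90^\circ)$. The paper's grid contains both endpoints, so your pseudo-inverse fallback is genuinely needed.

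Second, your continuous-case remark has the semicontinuity reversed. Every limit of the column space as two angles merge contains the span of the merged columns. Hence $\limsup\varepsilon\le\varepsilon(\boldsymbol{\theta}_0)$ at coalescence points, which is upper, not lower, semicontinuity. Passing to the $\liminf$ changes the problem, and the infimum over distinct angles need not be attained. For example, take $d=2$, $M\ge 4$ and $\boldsymbol{y}=\boldsymbol{a}(\theta_0)+\boldsymbol{b}(\theta_0)$ with $\theta_0$ interior: the infimum is $0$ but is never reached. Since you present the finite-set argument as the proof, this does not affect its validity, but that remark should be dropped or corrected.
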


\textit{Proof:} See Appendix \ref{Proof:Th-1}. $\hfill\blacksquare$

Theorem \ref{D_Local_Minima} indicates that, given a fixed cardinality, the landscape of problem \eqref{subspace_K_SD} is analogous to that of the trimmed Lasso \cite{amir2021trimmed} and possesses a multitude of $d$-sparse local minima. 

Denote the grid index set corresponding to a local minimizer of the subproblem as $\hat{\mathcal{G}}^*$. According to the definition of the bias parameter introduced above, the corresponding bias parameter can also be constructed for any local minimizer. Similar to the case of $d=K$, the local minimizer also exhibits the SAPD Property
\begin{equation}
	\boldsymbol{\beta}(\boldsymbol{\vartheta}_{\hat{\mathcal{G}}}^*- \Delta \boldsymbol{\theta}) > \mathbf{0}, \ \boldsymbol{\beta}(\boldsymbol{\vartheta}_{\hat{\mathcal{G}}}^* + \Delta \boldsymbol{\theta}) < \mathbf{0}.   \label{extend_angle_property}
\end{equation}
Therefore, according to the definition of the bias parameter, the SAPD Property represents a necessary property exhibited by the local minima of the subproblem on the discretized spatial grid.

\textbf{Remark:} It should be noted that although the definition of the SAPD Property is revealed from the definition of the bias parameter, and the bias parameter has a similar mathematical expression to the parameter describing angular deviation in off-grid estimation, the two concepts address fundamentally different problems and serve different purposes. The SAPD Property characterizes the structural feature of sparse DOA solutions on the discretized spatial grid, which provides the basis for exploiting the low computational complexity of the candidate solution verification process in subsequent procedures. Specifically, it enables avoiding the involvement of the large-scale matrix computations in subsequent computations and prevents the exhaustive evaluation of a large number of unordered candidate combinations. In contrast, off-grid estimation is mainly used to improve DOA estimation accuracy. Therefore, the two approaches target fundamentally different problems and serve \textbf{completely different} purposes.

The above analysis reveals the SAPD Property exhibited by sparse DOA solutions on the discretized spatial grid. Based on this property, a new objective function is further formulated, and the corresponding optimization method is introduced in the following section.

\subsection{SAPD-Constrained Sparse DOA Optimization Function}

To encode the SAPD Property into an equality constraint, we define the function 
\begin{align}
	\boldsymbol{h}(\boldsymbol{\vartheta}_{\mathcal{G}}) = \operatorname{sgn}{ (\boldsymbol{\beta}(\boldsymbol{\vartheta}_{\mathcal{G}} - \Delta \boldsymbol{\vartheta})) } + \operatorname{sgn}{(\boldsymbol{\beta}(\boldsymbol{\vartheta}_{\mathcal{G}} + \Delta \boldsymbol{\vartheta}))}. \label{h}
\end{align}
By imposing $\boldsymbol{h}(\boldsymbol{\vartheta}_{\mathcal{G}}) = \mathbf{0}$, the resulting equality constraint is referred to as the SAPD Constraint. Based on the above analysis, the SAPD Constraint is incorporated into the conventional sparse DOA estimation formulation, which can be reformulated as
\begin{equation}
	\begin{aligned}
		\min \quad &\| \boldsymbol{x} \|_0
		\\
		\text{s.t.} \quad \varepsilon(\boldsymbol{\vartheta}_{\mathcal{G}}) < & \ \epsilon, 
		\ \boldsymbol{h}(\boldsymbol{\vartheta}_{\mathcal{G}}) =  \mathbf{0}.
	\end{aligned} \label{SAPD_Simple}
\end{equation} 
where $\epsilon$ specifies the recovery error tolerance and determines the sparsity level of the resulting solution. The formulation in \eqref{SAPD_Simple} is referred to as the SAPD-Constrained Sparse DOA Optimization problem. In the SAPD-Constrained Sparse DOA formulation, the recovery error tolerance $\epsilon$ mainly controls the automatic estimation of the number of incident sources. The proposed method is insensitive to the selection of $\epsilon$, and its influence on the estimation performance will be further discussed in the experimental section.

Due to the introduction of the SAPD Property and SAPD Constraint, although the optimization problem \eqref{SAPD_Simple} still adopts the conventional $\ell_0$-norm as the sparsity objective, its solving process no longer requires an unordered traversal over all feasible candidates. Instead, the spatial offset information provided by the bias parameter can be exploited to perform an ordered search among candidate solutions. Specifically, the sign of the bias parameter determines the search direction, while its magnitude determines the search distance.

Since the above search process is performed on the discretized spatial grid and the offsets between candidate solutions are represented by grid indices, the search step size belongs to the set of positive integers $\mathbb{Z}_{+}$. Therefore, in the practical search process, the absolute value of the bias parameter only needs to be converted into the corresponding grid offset step, which is achieved by applying the ceiling operation.

During the subsequent search process, the optimization problem \eqref{SAPD_Simple} is solved by sequentially evaluating the corresponding subproblems \eqref{D_Local_Minima}, starting from the minimum and most probable sparsity level and gradually increasing the sparsity level $d$ until $d=M-1$. Once a subproblem first yields a solution satisfying both the recovery error tolerance and the SAPD Constraint, the corresponding subproblem solution is regarded as the final solution of \eqref{SAPD_Simple}. Since the above search is performed on the discretized spatial grid, the obtained estimate is an on-grid solution, which can be further refined through off-grid refinement.

Therefore, compared with the original $\ell_0$-norm minimization problem, the main difference of the proposed SAPD-Constrained Sparse DOA formulation \eqref{SAPD_Simple} does not lie in modifying the sparse optimization objective, but rather in exploiting the structural relationship between the true incident angles and the discretized spatial grid revealed by the SAPD Property to provide a search order for the originally unordered feasible solution verification process. By introducing the SAPD Constraint and the grid offset estimation based on the bias parameter, the proposed method transforms the unordered verification of a large number of candidate solutions in conventional $\ell_0$-norm optimization into an ordered grid search process with directional and step-size information. Consequently, compared with unordered exhaustive verification, the proposed search strategy avoids the individual evaluation of a large number of unnecessary candidate combinations, thereby reducing the overall search complexity.

\subsection{The Approximate of Bias Parameter}

Since the bias parameter $\boldsymbol{\beta}_{\mathcal{G}}$ cannot be directly obtained from the measurements, it needs to be indirectly estimated based on the observation model. 

A first-order Taylor expansion is adopted to approximate the array manifold, thereby establishing an estimable model of the bias parameter. Accordingly, $\boldsymbol{y} \approx \mathbf{A}(\boldsymbol{\vartheta}_{\mathcal{G}})\boldsymbol{x}_{\mathcal{G}} + \mathbf{B}(\boldsymbol{\vartheta}_{\mathcal{G}})\operatorname{diag}(\boldsymbol{\beta}_{\mathcal{G}})\boldsymbol{x}_{\mathcal{G}} + \boldsymbol{n}$ , where $\mathbf{B}(\boldsymbol{\vartheta}) = [\boldsymbol{b}(\theta_{g_1}), \dots, \boldsymbol{b}(\theta_{g_d})]$ with $\boldsymbol{b}(\theta_g)=\frac{\partial \boldsymbol{a}(\theta_g)}{\partial \theta_g}$, and $\boldsymbol{\beta}_{\mathcal{G}} \in \mathbb R^d$ denotes the bias parameter vector associated with the given grid points indexed by $\mathcal G$ with cardinality $d$. 

After obtaining the corresponding sparse coefficient vector 
$\boldsymbol{x}_{\mathcal G}$ via \eqref{x_theta_function}, 
the following derivation follows the idea in \cite{SC_Bilinear}. Using the identity $\operatorname{diag}(\boldsymbol{\beta}_{\mathcal{G}})\boldsymbol{x}_{\mathcal{G}} = \operatorname{diag}(\boldsymbol{x}_{\mathcal{G}})\boldsymbol{\beta}_{\mathcal{G}}$, the observation model is reformulated as  $\boldsymbol{y} = \mathbf{A}(\boldsymbol{\vartheta}_{\mathcal{G}}) \boldsymbol{x}_{\mathcal{G}} + \mathbf{B}({ \boldsymbol{\vartheta}_{\mathcal{G}} }) \operatorname{diag}(\boldsymbol{x}_{\mathcal{G}}) { \boldsymbol{\beta}_{\mathcal{G}} } + \boldsymbol{n}$. Therefore, the estimation of the bias parameter $\boldsymbol{\beta}_{\mathcal{G}}$ associated with
$\mathcal G$ can be formulated as the following real-valued constrained
least-squares problem
\begin{equation}
	\mathop{\arg\min}\limits_{\boldsymbol{\beta}_{\mathcal{G}} \in \mathbb{R}^d} \| \tilde{\boldsymbol{y}} - \tilde{\mathbf{B}}({ \boldsymbol{\vartheta}_{\mathcal{G}} }) \boldsymbol{\beta}_{\mathcal{G}} \|_2^2, \label{bias_theta_function}
\end{equation}
where $\tilde{\boldsymbol{y}} = \boldsymbol{y} -\mathbf{A}(\boldsymbol{\vartheta}_{\mathcal{G}})\boldsymbol{x}_{\mathcal{G}}$ and $\tilde{\mathbf{B}}({ \boldsymbol{\vartheta}_{\mathcal{G}} }) = \mathbf{B}({ \boldsymbol{\vartheta}_{\mathcal{G}} }) \operatorname{diag}(\boldsymbol{x}_{\mathcal{G}})$. 
Under the real-valued constraint $\boldsymbol{\beta}_{\mathcal{G}} \in \mathbb{R}^{\vert \mathcal{G} \vert}$, the closed-form solution to \eqref{bias_theta_function} associated with $\mathcal G$ is given by
\begin{equation}
	\begin{aligned}
		\boldsymbol{\beta}_{\mathcal{G}} &= \boldsymbol{\beta}(\boldsymbol{\vartheta}_{\mathcal{G}}) 
		\\
		&=  \{\operatorname{Re}(\tilde{\mathbf{B}}({ \boldsymbol{\vartheta}_{\mathcal{G}} })^H \tilde{\mathbf{B}}({ \boldsymbol{\vartheta}_{\mathcal{G}} }))\}^{-1} \cdot \operatorname{Re}\{ \tilde{\mathbf{B}}({ \boldsymbol{\vartheta}_{\mathcal{G}} })^H \tilde{\boldsymbol{y}} \}.
	\end{aligned} \label{spatial_angular_pseudo_dev}
\end{equation}
The formulation in \eqref{spatial_angular_pseudo_dev} has a similar representation to the iterative off-grid parameter estimation formulation in \cite{Off-grid_Chen}, where both are derived based on a first-order Taylor expansion. However, the purpose of utilizing this estimation result in this work is fundamentally different from that of conventional off-grid refinement.

During the verification of the SAPD Constraint, the bias parameter associated with a given index set $\mathcal{G}$ needs to be calculated. However, since the grid points in $\mathcal{G}$ do not necessarily correspond to the grid points closest to the true incident angles in the complete discretized spatial grid $\boldsymbol{\vartheta}$, the exact values of the bias parameters are generally difficult to obtain directly. On the other hand, since the SAPD Constraint mainly exploits the sign information of the bias parameter and converts its magnitude into the search step size on the discretized grid, an exact estimation of the bias parameter is not required. Instead, only an approximate estimation that can capture its spatial variation characteristics is needed. Specifically, for a given index set $\mathcal{G}$, the corresponding sparse coefficient vector $\boldsymbol{x}_{\mathcal{G}}$ is first obtained via \eqref{x_theta_function}, and the bias parameter $\boldsymbol{\beta}_{\mathcal{G}}$ is subsequently estimated through \eqref{spatial_angular_pseudo_dev}.

Therefore, during the candidate solution verification process, the SAPD Constraint only introduces one additional least-squares estimation compared with the conventional exhaustive verification process, resulting in a limited computational overhead. Consequently, the overall computational complexity of the verification process remains $O(Md^2)$, while avoiding the involvement of the full overcomplete dictionary or covariance-like matrices in subsequent computations.

Nevertheless, since problem \eqref{SAPD_Simple} remains a nonconvex optimization problem, the quality of initialization plays a critical role in the success of the above procedure. The solution strategy for problem \eqref{SAPD_Simple} is therefore developed in the next section.

\section{SAPD Search Algorithm} \label{Sec4}

The proposed method for solving problem \eqref{SAPD_Simple} is referred to as the SAPD Search Algorithm. For convenience, a given index set $\mathcal{G}$ is referred to as the verification index set. The corresponding fixed cardinality $\vert \mathcal{G} \vert = d$ is referred to as the verification sparsity level. Furthermore, the grid point corresponding to each element in $\mathcal{G}$ is referred to as a verification grid point. 

The SAPD Search Algorithm consists of three components which are Spatial Spectrum Information Extraction, the Main Loop, and DOA Refinement. The overall framework of the proposed algorithm is illustrated in Fig. \ref{Overview_of_SAPD_Search_Scheme_Procedure}.

Spatial Spectrum Information Extraction consists of two procedures, namely Global Initialization and Global Compensation Prioritization. This component provides the initialization information for problem \eqref{SAPD_Simple} and its associated fixed-sparsity subproblems \eqref{subspace_K_SD}.

The Main Loop starts from the smallest and most plausible verification sparsity level determined by Global Initialization. Using the initial verification index set $\mathcal{G}$, the corresponding sparse DOA estimation subproblem \eqref{subspace_K_SD} is solved via SAPD-Guided Search. If the obtained solution does not satisfy the prescribed recovery error tolerance $\epsilon$, the verification sparsity level is increased. Subsequently, Global Compensation Prioritization provides the initialization information for the new fixed-sparsity subproblem \eqref{subspace_K_SD}, and Subproblem Reinitialization is performed to update the verification index set $\mathcal{G}$. SAPD-Guided Search is then executed again. This process continues until a solution satisfying both the SAPD Constraints and the prescribed recovery error tolerance $\epsilon$ is obtained for the first time. The output of the Main Loop is the corresponding on-grid solution.

Finally, the obtained on-grid solution is used to initialize the estimation of the corresponding off-grid bias, thereby refining the DOA estimates.

\begin{figure}[!t]
	\centering
	\includegraphics[width=3.2in]{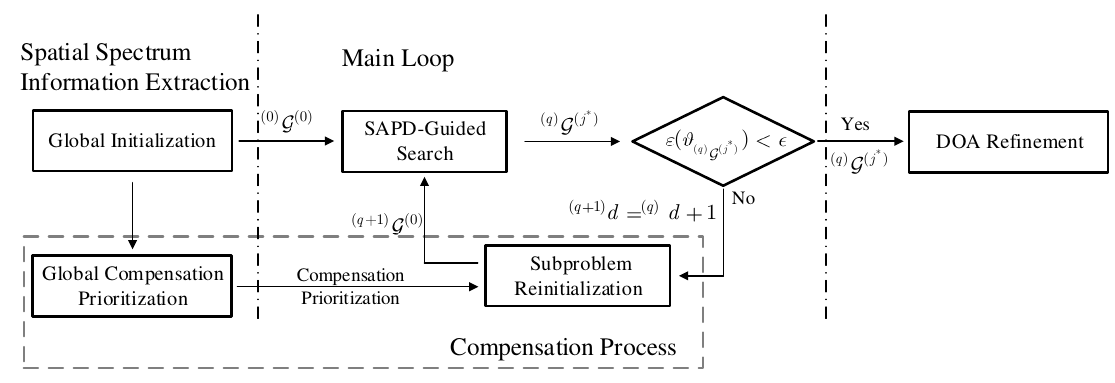}
	\caption{\centering{Overview of SAPD Search Scheme Procedure}}
	\label{Overview_of_SAPD_Search_Scheme_Procedure}
\end{figure}

SAPD-Guided Search only updates the elements of $\mathcal{G}$, whereas Subproblem Reinitialization updates both the cardinality and the elements of $\mathcal{G}$. To distinguish these two types of updates, a two-level superscript notation is adopted throughout the search process. Specifically, the left superscript $(q)$ denotes the iteration number of the Main Loop, i.e., the number of updates of $\mathcal{G}$ performed by Subproblem Reinitialization. The verification sparsity level at the $q$-th Main Loop iteration is denoted by ${}^{(q)}d$. The right superscript $(j)$ denotes the $j$-th search iteration of SAPD-Guided Search under the current verification sparsity level ${}^{(q)}d$. Accordingly, ${}^{(q)}\mathcal{G}^{(j)}$ represents the verification index set corresponding to the $j$-th search iteration of SAPD-Guided Search during the $q$-th iteration of the Main Loop under the current verification sparsity level ${}^{(q)}d$. Therefore, the proposed method estimates the number of incident sources automatically and does not require prior knowledge of the number of sources.

The remainder of this section presents the individual components of the SAPD Search Algorithm in sequence, namely Global Initialization, SAPD-Guided Search, Global Compensation Prioritization, and Subproblem Reinitialization.

\subsection{Global Initialization based Spatial Spectrum}

Since the optimization problem \eqref{SAPD_Simple} and the sequence of fixed-sparsity subproblems \eqref{subspace_K_SD} are inherently nonconvex, an inappropriate initialization strategy may not only increase the number of search iterations but also cause the final solution to become trapped in a local optimum. Therefore, an effective initialization strategy is of critical importance for solving problem \eqref{SAPD_Simple} and its sequence of fixed-sparsity subproblems \eqref{subspace_K_SD}. 

In this subsection, we introduce the Global Initialization procedure. This procedure exploits the spatial spectrum obtained via Bartlett beamforming to provide the initialization information required by the proposed SAPD Search Algorithm and generates the initial verification index set  ${}^{(0)}\mathcal{G}^{(0)}$ for the overall optimization problem. Its corresponding verification sparsity level ${}^{(0)}d$ is the smallest and most plausible initial sparsity level throughout the entire search process.

Bartlett beamforming is then employed to estimate the power at each grid point, denoted by  $P_{\text{s}}(\theta_g)$ for $\theta_g \in \boldsymbol{\vartheta}$, $g = 1, \dots, G$. In the single-snapshot case, the spatial spectrum can be expressed as \cite{96-Twodecades}
\begin{align}
	P_{\text{s}}(\theta_g) = \vert \boldsymbol{a}^H(\theta_g) \, \boldsymbol{y} \vert^2. \label{ss_bb}
\end{align}
Let $P_{\text{s}}(\boldsymbol{\vartheta}) \in \mathbb{R}^{G}$ denote the spatial-spectrum vector over all grid points. The normalized spatial spectrum is defined as $\tilde{P}_\text{s}(\boldsymbol{\vartheta}) = \tfrac{P_\text{s}(\boldsymbol{\vartheta})}{ \| P_\text{s}(\boldsymbol{\vartheta}) \|_{\infty} }$. To estimate the spectral noise floor $n_f$, we first identify the subset $\mathcal{N}$ of low-power samples in the spatial spectrum, which is defined as
\begin{equation}
	\begin{aligned}
		\mathcal{N} = \{ P_{\text{s}}(\theta_{m_i}) \mid
		\tilde{P}_{\text{s}}(\theta_{m_i}) < \eta_n, \ 
		m_i \in \mathcal{M}, \  \mathcal{M} \subset [G]
		\}
	\end{aligned}
\end{equation}
where $\eta_n \in \mathbb{R}$ denotes a threshold used to distinguish the background noise floor from the local minima located between adjacent spectral peaks. Then, the spectral noise floor $n_f$ is defined as the mean power of the grid points within the set $\mathcal{N}$. The regions corresponding to the set $\mathcal{N}$ are generally assumed to contain no incident signals.

According to the conventional interpretation in spatial spectrum analysis, the mainlobe in the spatial spectrum is typically assumed to be generated by a single incident signal. Therefore, when adjacent mainlobes become merged, it is generally considered that the current spatial spectrum can no longer effectively resolve the corresponding incident signals, implying estimation failure. However, such treatment often neglects the substantial amount of angular information still contained in the merged spatial spectrum. Existing literature provides relatively limited characterization of the structural properties exhibited by spatial spectra in the presence of mainlobe merging. To avoid ambiguity in the subsequent discussion, it is therefore necessary to further introduce several new concepts capable of describing such spatial spectrum structures, thereby enabling a clearer characterization of the incident-angle information contained in the spatial spectrum.

The set of detected spectral peaks $\mathcal{P}$ is defined as
\begin{equation}
	\mathcal{P} = \{ p_i \mid p_i = \tilde{P}_{\text{s}}(\theta_i), \  p_i  > \eta_t, \  \theta_i \in \boldsymbol{\theta}, \  i \in [D] \}, 
\end{equation} 
where $\eta_t > \eta_n$ denotes the peak detection threshold. Owing to the Rayleigh resolution limit and peak merging effects, the number of resolved spectral peaks $D$ satisfies $D \le K$. The half-power level corresponding to the peak $p_i$ is defined as $P_{h_i} = \tfrac{1}{2} (p_i - n_f)$.  

Let $\bar{\theta}_i$ denote the angular position corresponding to the spectral peak $p_i$. The angle of the first grid point located to the left of $\bar{\theta}_{i}$ satisfying $P_{\text{s}}(\bar{\theta}_f^{l_i}) < P_{h_i}$ is denoted by $\bar{\theta}_f^{l_i}$, where $\bar{\theta}_f^{l_i} < \bar{\theta}_i$. The nearest local $\bar{\theta}_m^{l_i}$ minimum located to the left of $\bar{\theta}_i$, where $\bar{\theta}_m^{l_i} < \bar{\theta}_i$. The generalized half power point $\theta_{h}^{l_i}$ in the left of $\bar{\theta}_{i}$ is defined as
\begin{equation}
	\theta_{h}^{l_i} = \min ( \bar{\theta}_i - \bar{\theta}_f^{l_i}, \  \bar{\theta}_i - \bar{\theta}_m^{l_i} ).
\end{equation}
Similarly, the angle of the first grid point located to the right of $\bar{\theta}_{i}$ satisfying $P_{\text{s}}(\bar{\theta}_f^{r_i}) < P_{h_i}$ is denoted by $\bar{\theta}_f^{r_i}$, where $\bar{\theta}_f^{r_i} > \bar{\theta}_i$. The nearest local minimum $\bar{\theta}_m^{r_i}$ located to the right of $\bar{\theta}_i$, where $\bar{\theta}_m^{r_i} > \bar{\theta}_i$. The generalized half power point $\theta_{h}^{r_i}$ in the right of $\bar{\theta}_{i}$ is defined as
\begin{equation}
	\theta_{h}^{r_i} = \min ( \bar{\theta}_f^{r_i} - \bar{\theta}_i, \  \bar{\theta}_m^{r_i} -  \bar{\theta}_i).
\end{equation}

To facilitate the characterization of unresolved mainlobe structures, the concept of the Generalized Mainlobe Region (GMR) is introduced.
\begin{definition}[Generalized Mainlobe Region, GMR] 
	For any spectral peak $p_i \in \mathcal{P}$, the corresponding generalized mainlobe region (GMR) is defined as the local spatial spectral region determined by the boundary angles
	\begin{equation}
		\mathcal{B}_i = [\theta_{h}^{l_i}, \theta_{h}^{r_i}].
	\end{equation}
\end{definition}

Based on the definition of GMR, the beamwidth of the GMR is called the generalized beamwidth, which is defined as
\begin{equation}	
	L_{B_i} = \theta_{h}^{r_i} - \theta_{h}^{l_i}. \label{beam_side_length}
\end{equation}

Let $\bar{\theta}_0 = \theta_1$ and $\bar{\theta}_{D+1} = \theta_G$, where $\theta_1$ and $\theta_G$ denote the starting and ending grid points of the spatial grid $\boldsymbol{\vartheta}$, respectively. The inter-peak region is defined as $\bar{\mathcal{V}}_i = [\bar{\theta}_i, \bar{\theta}_{i+1}]$, $i \in {0, 1, \dots, D}$. Let $P_{\mathcal{V}_i} = \min_{\theta \in \bar{\mathcal{V}}_i} P_\text{s} (\theta)$ denote the minimum spectral power within the $i$-th inter-peak region. The set of minimum spectral power values corresponding to all inter-peak regions is defined as
\begin{equation}
	\mathcal{M} = \{ P_{\mathcal{V}_0}, \ P_{\mathcal{V}_1}, \ \dots, \ P_{\mathcal{V}_{D}} \}.
\end{equation}

\begin{figure}[!t]
	\centering
	\includegraphics[width=3.2in]{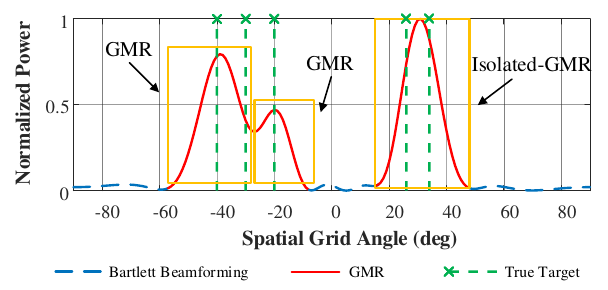}
	\caption{\centering{Different Types of GMR}}
	\label{DGMR}
\end{figure}

\begin{definition}[Valley Region, VR] 
	For each minimum spectral power value $P_{\mathcal{V}_i}$, the corresponding valley region (VR) is defined as the local spatial spectral region determined by the boundary angles
	\begin{equation}
		\mathcal{V}_i = [\theta_{v}^{l_i}, \theta_{v}^{r_i}], \  i = {0, 1, \dots, D},
	\end{equation}
	where $\theta_v^{l_i}$ and $\theta_v^{r_i}$ denote the left and right boundaries of the corresponding valley region, respectively.
\end{definition}

The left boundary $\theta_v^{l_i}$ of the valley region is determined as follows. For $i=0$, the left boundary is set to the starting grid point $\theta_1$. Otherwise, if the generalized right half-power point $\theta_h^{r_i}$ corresponds to the exact half-power point $\bar{\theta}_f^{r_i}$, the left valley boundary is defined as $\theta_h^{r_i} = \bar{\theta}_f^{r_i}$. If $\theta_h^{r_i}$ corresponds to the nearest local minimum $\bar{\theta}_m^{r_i}$ located to the right of the peak $\bar{\theta}_i$, the left valley boundary is instead defined as $\theta_{v}^{l_i} = \bar{\theta}_i$.

Similarly, if the generalized left half-power point $\theta_h^{l_{i+1}}$ corresponds to the exact half-power point $\bar{\theta}_m^{l_{i+1}}$, the right valley boundary is defined as $\theta_v^{r_i} = \theta_h^{l_{i+1}}$. Otherwise, if $\theta_h^{l_{i+1}}$ corresponds to the nearest local minimum $\bar{\theta}_m^{l_{i+1}}$ located to the left of the peak $\bar{\theta}_{i+1}$, the right valley boundary is defined as $\theta_h^{r_i} = \bar{\theta}_{i+1}$. For $i = D$, the right boundary is set to the ending grid point $\theta_G$.

Based on the definition of the valley region (VR), two additional concepts are introduced to further characterize the structural properties of the spatial spectrum. If $P_{\mathcal{V}_i} > n_f $, the corresponding VR is referred to as an \textbf{unresolved valley region} (UVR), indicating that the valley region may still contain unresolved signal information. Conversely, if $P_{\mathcal{V}_i} < n_f $, the corresponding VR is referred to as a \textbf{resolved valley region} (RVR), indicating that the corresponding spectral components are sufficiently separated and the valley region is unlikely to contain additional signal information.

Based on the definitions of the GMR and VR, the structural characteristics of the spatial spectrum can be more precisely described. In particular, if the valley regions
$\mathcal{V}_{i-1}$ and  $\mathcal{V}_i$ adjacent to a GMR $\mathcal{B}_i$ are both resolved valley regions (RVRs), the GMR $\mathcal{B}_i$ is considered to be spectrally separated from its neighboring GMRs. Such a structure is referred to as an \textbf{isolated generalized mainlobe region}  (Isolated-GMR).

\begin{figure}[!t]
	\centering
	\includegraphics[width=3.1in]{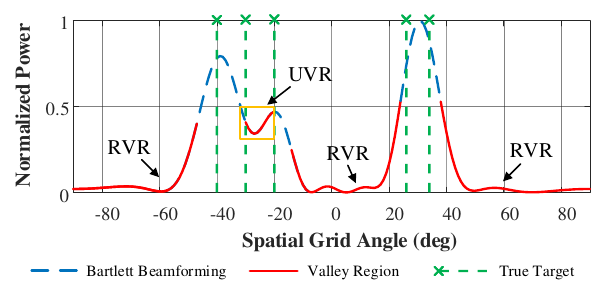}
	\caption{\centering{Different types of VR}}
	\label{DVR}
\end{figure}

\textbf{Remarks}: It should be emphasized that the concept of isolated-GMR characterizes only the separability between adjacent GMR structures, rather than the number of incident sources contained within a GMR. Therefore, an isolated-GMR may still contain multiple unresolved incident sources caused by mainlobe merging. In other words, the isolated property only indicates that the corresponding GMR is sufficiently separated from neighboring spectral structures, such that its local width characteristics can be reliably analyzed without significant interference from adjacent GMRs.

Based on the above structural definitions, several observable angular characteristics can be directly inferred from the spatial spectrum under moderate-to-high SNR conditions.

\textbf{Observation 1}: First, each GMR is associated with at least one incident source.

\textbf{Observation 2}: Second, for an isolated-GMR, the influence of neighboring GMRs can be neglected due to the existence of RVRs on both sides. Under the single-source assumption, the width of the corresponding mainlobe is expected to remain within the nominal beamwidth range. Therefore, if $L_{\mathcal{B}_i} > \theta_{3dB} + \delta$, where $\theta_{3dB}$ denotes the nominal beamwidth and $\delta$ is a compensation term accounting for noise perturbation and discretization effects, the corresponding isolated-GMR can be inferred to contain at least two incident sources.

The compensation term $\delta$ is an empirical parameter. In millimeter-wave radar DOA estimation, under commonly used angular grid resolutions, the discretization error is typically within $[1^\circ, \, 2^\circ]$, while noise perturbation usually manifests as a slight broadening of the mainlobe. Therefore, $\delta$ is uniformly set to $2^\circ$ throughout this paper. This choice ensures reliable multi-source detection while avoiding unnecessary repeated detections caused by an excessively large compensation range.

\textbf{Observation 3}: Third, unresolved valley regions (UVRs) indicate that adjacent GMRs are not fully separated, implying that unresolved incident sources may still exist within the corresponding spectral interval.

Based on the observable angular information inferred from the spatial spectrum structures described above, we now introduce the initialization procedure of the proposed SAPD Search Scheme.

\textbf{Initialization ${}^{(0)}\mathcal{G}^{(0)}$}: By exploiting the structural information provided by the spatial spectrum, both the number of incident sources and the initial locations for the subsequent search process can be initialized. However, it should be emphasized that if the number of initialized angles exceeds the true number of incident sources, the algorithm will directly produce erroneous estimation results.

Therefore, a conservative initialization strategy is adopted in this work. Specifically, the initialization procedure preferentially selects the spatial spectral regions that are most likely to contain incident sources, and initializes both the number of initial incident angles and the starting locations for the subsequent search process based on these regions.

First, a preliminary detection procedure is applied to the spatial spectrum to estimate the noise floor $n_f$, identify the set of spectral peaks $\mathcal{P}$, and extract the corresponding GMRs $\mathcal{B}_i$. Subsequently, the VRs, UVRs, RVRs, and isolated-GMRs are determined. 

Based on the above observations, it is assumed that incident angles are most likely located within the GMRs. Therefore, in the Global Initialization step, both the initial angle estimates and the corresponding initial sparsity level for the subsequent SAPD search are determined only within the GMRs.

According to Observation 2, the initialization procedure for an isolated-GMR $\mathcal{B}_i$ is performed as follows. If $L_{\mathcal{B}i} < \theta_{3dB} + \delta$, the initialized angle set $\mathcal{I}_i$ contains only one angle, namely the peak location of the corresponding GMR, i.e.,
\begin{equation}
	\mathcal{I}_i = \{ \bar{\theta}_i \}.
\end{equation}
Otherwise, if $L_{\mathcal{B}i} > \theta_{3dB} + \delta$, two angles are initialized within the corresponding GMR, i.e.,
\begin{equation}
	\mathcal{I}_i = \{ \bar{\theta}_{\mathcal{I}_i}^1, \  \bar{\theta}_{\mathcal{I}_i}^2  \}, 
\end{equation}
where
\begin{equation}
	\bar{\theta}_{\mathcal{I}_i}^1 = \theta_h^{l_i} + \lceil \frac{\bar{\theta}_i - \theta_h^{l_i}  }{2} \rceil, \   \bar{\theta}_{\mathcal{I}_i}^2 = \bar{\theta}_i + \lceil \frac{ \theta_h^{r_i} -  \bar{\theta}_i }{2} \rceil.
\end{equation}

If the GMR $\mathcal{B}_i$ is not an isolated-GMR, it may contain a more complicated peak-merging structure. Therefore, according to Observation 1, only one angle is initialized within the corresponding GMR during the initialization stage, namely the angle associated with its peak location, i.e.,
\begin{equation}
	\mathcal{I}_i = \{ \bar{\theta}_i \}.
\end{equation}

The overall initialized angle set is then defined as the union of all initialized angle sets, i.e.,
\begin{equation}
	\mathcal{I} = \bigcup_{i = 1}^{\vert \mathcal{P} \vert} \mathcal{I}_i.
\end{equation}
\textbf{The first verification index set ${}^{(0)}\mathcal{G}^{(0)}$ is then obtained as the set of grid indices corresponding to the angles in $\mathcal{I}$.}

The steps of the initialization process are outlined in Algorithm \ref{alg:1}.
\begin{algorithm}
	\caption{Global Initialization via Spatial Spectrum}
	\label{alg:1}
	\begin{algorithmic}[1]
		\Require Observation data $\boldsymbol{y}\in\mathbb{C}^{M}$
		\Ensure Initial verification index set ${}^{(0)}\mathcal{G}^{(0)}$
		
		\State Compute the spatial spectrum $P_{\mathrm{s}}(\boldsymbol{\vartheta})$ using \eqref{ss_bb}
		
		\State Estimate the noise floor $n_f$, identify the valid peak set $\mathcal{P}$, and determine the GMRs, VRs, UVRs, RVRs, and isolated-GMRs
		
		\State Initialize each GMR according to Observations 1 and 2, and obtain the initialized angle set $\mathcal{I}$ and the corresponding verification index set ${}^{(0)}\mathcal{G}^{(0)}$
		
		\State \Return ${}^{(0)}\mathcal{G}^{(0)}$
	\end{algorithmic}
\end{algorithm}

\subsection{SAPD-Guided Search Step}

After Global Initialization or Subproblem Reinitialization, SAPD-Guided Search is employed to solve the subproblem \eqref{subspace_K_SD} associated with the current verification sparsity level ${}^{(q)}d$. Since the subproblem \eqref{subspace_K_SD} admits local minimizers and the SAPD Property of the sparse DOA solutions provides information regarding the relative position of the current solution with respect to a local minimizer, the bias parameter $\boldsymbol{\beta}(\boldsymbol{\vartheta}_{{}^{(q)}\mathcal{G}^{(j)} })$ obtained from \eqref{spatial_angular_pseudo_dev} can be utilized as a search indicator to update the elements of the verification index set ${}^{(q)}\mathcal{G}^{(j)}$. Motivated by this observation, SAPD-Guided Search transforms the solution of subproblem \eqref{subspace_K_SD} into a discrete search process on the spatial grid and progressively searches for a solution satisfying the SAPD Constraint.

During SAPD-Guided Search, the verification sparsity level ${}^{(q)}d$ remains unchanged. Therefore, only the values of the elements within the verification index set need to be updated. For notational simplicity, the left superscript ${ (q) }$ is omitted in the following discussion, and the verification index set corresponding to the $j$-th search step is denoted by $\mathcal{G}^{(j)} $. The detailed search process is described as follows.

1) Search Step: During the $j-1$-th search, the sparse coefficient vector $\boldsymbol{x}_{\mathcal{G}^{(j-1)}}$ and the corresponding bias parameter $\boldsymbol{\beta}_{\mathcal{G}^{(j - 1)}}$ are obtained from the current angle index set $\mathcal{G}^{(j-1)}$ through \eqref{x_theta_function} and \eqref{spatial_angular_pseudo_dev}, respectively. Subsequently, based on the SAPD property, the search step corresponding to the $j$-th search, denoted by $\boldsymbol{S}_j \in \mathbb{R}^k$, is defined as
\begin{equation}
	\boldsymbol{S}^{(j)} =  \operatorname{sgn}(\boldsymbol{\beta}^{(j - 1)}) \cdot \left\lceil \vert \boldsymbol{\beta}^{(j - 1)} \vert \right\rceil \label{search_step_size}
\end{equation}
Subsequently, the current angle index set $\mathcal{G}^{(j - 1)}$ is updated to obtain the angle index set corresponding to the $j$-th search, i.e.,
\begin{equation}
	\mathcal{G}^{(j)} = \mathcal{G}^{(j-1)} + S^{(j)}. \label{update_grid_index}
\end{equation}
After obtaining the angle index set $\mathcal{G}^{(j)}$ for the $j$-th search, the sparse coefficient vector $\boldsymbol{x}_{\mathcal{G}^{(j)}}$ is re-estimated using \eqref{x_theta_function}. Then, the corresponding bias parameter $\boldsymbol{\beta}_{\mathcal{G}^{(j)}}$ is computed according to \eqref{spatial_angular_pseudo_dev}. Subsequently, the current search result is examined to determine whether the search termination condition is satisfied. If the termination condition is not met, the above search procedure is repeated.

2) Termination criterion: Since SAPD-Guided Search is essentially a discrete search process, conventional convergence criteria used in optimization problems are not adopted. Instead, a search termination criterion based on the oscillatory behavior of the angle index set is employed. Furthermore, such oscillatory behavior can be characterized through the variation pattern of consecutive search steps. To this end, a decision function $\boldsymbol{C}(j)$ is defined as
\begin{equation}
	\boldsymbol{C}(j) = \boldsymbol{S}^{(j-2)} + \boldsymbol{S}^{(j-1)} + \boldsymbol{S}^{(j)}.
\end{equation}

According to the definition of the search step $\boldsymbol{S}_j$, the term $\left\lceil \vert \boldsymbol{\beta}^{(j-1)} \vert \right\rceil$ corresponds to the ceiling operation applied to the magnitude of the bias parameter. Therefore, when the search enters the region corresponding to the SAPD Constraints, each component of the search step degenerates to $\pm 1$. If 
\begin{equation}
	\sum_{i = 1}^k \vert \boldsymbol{C}(j)_i \vert = k, \label{terminator}
\end{equation}
then, since each component of the search step can only take values from $\pm 1$, the above condition implies that three consecutive search steps exhibit an alternating sign pattern at each corresponding component. Consequently,
\begin{equation}
	\boldsymbol{S}_{j-2} = - \boldsymbol{S}_{j-1} = \boldsymbol{S}_{j}.
\end{equation}
This indicates that the search process repeatedly moves back and forth between two neighboring discrete angular locations for every potential DOA, yielding $\mathcal{G}^{(j-2)} = \mathcal{G}^{(j)}$. In other words, the search process no longer generates a new angle index set and instead enters a periodic oscillation between two adjacent grid locations. According to the definition of the SAPD Constraints, such an oscillatory state corresponds to a search result satisfying the SAPD Constraints, i.e., $h(\boldsymbol{\theta}_{\mathcal{G}^{(j)}}) = \mathbf{0}$. Therefore, the current search is regarded as having reached the termination condition, and the final search iteration is denoted by $\tilde{j}$.

3) The output of SAPD-Guided Search: After the search termination condition is satisfied, the search process enters a discrete oscillation state. Therefore, among the two oscillatory verification angular index sets that trigger the termination condition, namely $\mathcal{G}^{(\tilde{j} - 1)}$ and $\mathcal{G}^{(\tilde{j})}$, the angular index set corresponding to the minimum reconstruction error is selected as the output of the current SAPD Search, i.e.,
\begin{equation}
	\mathcal{G}^{(j*)} = \mathop{\arg\min}\limits_{\mathcal{G} \in \{ \mathcal{G}^{(\tilde{j} - 1)}, \mathcal{G}^{(\tilde{j})} \} }  \{ \varepsilon( \boldsymbol{\theta}_{\mathcal{G}^{(\tilde{j} - 1)}} ),  \varepsilon( \boldsymbol{\theta}_{\mathcal{G}^{(\tilde{j})}} ) \}. \label{SAPD_Search_Output}
\end{equation}
where $j^* \in \{ \tilde{j} - 1, \ , \tilde{j} \}$ denotes the search iteration associated with the selected output angular index set.

If $\varepsilon( \boldsymbol{\theta}_{\mathcal{G}^{(j*)}} ) > \epsilon$, the current verification sparsity level is insufficient to satisfy the recovery error tolerance $\epsilon$. In this case, the flag variable is set to $F_s = 1$, indicating that the SAPD Search Algorithm remains in the Main Loop, where the verification sparsity level is increased sequentially until an on-grid solution satisfying both the recovery error tolerance $\epsilon$ and the SAPD Constraints is obtained.

Otherwise, if $\varepsilon(\boldsymbol{\theta}_{\mathcal{G}^{(j*)}}) < \epsilon$,  the current verification index set ${G}^{(j*)}$ is accepted as the on-grid solution, $F_s$ is set to 0, and the Main Loop is terminated. Based on the above derivation, the complete procedure of the SAPD Search Step is summarized in Algorithm \ref{SAPD_Search_Step}.

\begin{algorithm}
	\caption{SAPD Search Step}
	\label{SAPD_Search_Step}
	
	\begin{algorithmic}[1]
		
		\Require Observation data $\boldsymbol{y}\in\mathbb{C}^{M}$,
		initialization angular index set $\mathcal{G}^{(0)}$,
		maximum search number $J$, and $F_s = 1$
		
		\Ensure The angular index sets ${G}^{(j*)}$, and the flag variable $F_s$
		
		\For{$j = 1$ to $J$}
		
		\State Calculate the sparse vector coefficient
		$\boldsymbol{x}_{\mathcal{G}^{(j - 1)}}$
		via \eqref{x_theta_function}
		
		\State Calculate the grid bias
		$\boldsymbol{\beta}_{\mathcal{G}^{(j - 1)}}$
		via \eqref{spatial_angular_pseudo_dev}
		
		\If{$j \ge 3$ and $\boldsymbol{C}(j)$ satisfies \eqref{terminator} }
		
		\State Break
		
		\EndIf
		
		\State Calculate the search step
		$\boldsymbol{S}^{(j)}$
		via \eqref{search_step_size}
		
		\State Update the angular index set $\mathcal{G}^{(j)}$ via \eqref{update_grid_index}.
		
		\EndFor
		
		\State Determine $\mathcal{G}^{(j^*)}$ via \eqref{SAPD_Search_Output}
		
		\If{$\varepsilon(\boldsymbol{\theta}_{\mathcal{G}^{(j^*)}})\le \epsilon$}
		
		\State Set $F_s = 0$
		
		\State \Return ${G}^{(j*)}$ and $F_s$.
		
		\Else
		
		\State Set $F_s = 1$
		
		\State \Return \Return $F_s$
		
		\EndIf
		
	\end{algorithmic}
\end{algorithm}

It is worth noting that the SAPD Search exploits the SAPD property established in this paper, enabling the search process to be performed directly in the discrete angular space through local search, rather than exhaustively traversing all possible angular combinations. Therefore, compared with conventional exhaustive search methods, the SAPD Search effectively avoids the exponential growth of the number of angular combinations as the sparsity level under verification increases, thereby significantly reducing the search space. 

Furthermore, in each search step of the SAPD Search, only two least-squares subproblems associated with the current angular index set need to be solved. These computations are performed based solely on the array manifold matrix corresponding to the current angular index set. Since the sparsity level under verification always satisfies $k < M$, the dimension of the array manifold matrix involved in the computation is at most $M \times (M - 1)$, and can be as small as $M \times 1$. Consequently, the SAPD Search always operates on low-dimensional array manifold matrices throughout the entire search procedure, without requiring matrix inversion or large-scale matrix multiplications on overcomplete dictionaries.

In addition, the SAPD Search is fundamentally different from conventional grid refinement and dynamic grid methods in sparse DOA estimation. Although these methods continuously refine or dynamically update grid locations during the iterative process, their optimization procedures are still built upon an overcomplete dictionary framework. Consequently, the corresponding sensing matrix remains an overcomplete dictionary of size $M \times G^\prime$, where $G^\prime > M$ denotes the dimension of the dynamically updated grid. Therefore, their computational procedures remain dependent on high-dimensional overcomplete dictionaries. In contrast, the SAPD Search completely eliminates the reliance on the iterative optimization framework based on overcomplete dictionaries. During the search process, the angular search at the current sparsity level under verification is performed solely using the array manifold matrix constructed from the current candidate angles, without requiring a large-scale overcomplete dictionary. This constitutes one of the fundamental differences between the proposed method and existing grid refinement and dynamic grid approaches. 

Since the entire search procedure is always carried out on low-dimensional array manifold matrices, the SAPD Search is able to effectively control the computational complexity while maintaining its search capability.

\subsection{Global Compensation Prioritization and Subproblem Reinitialization Step}

When $\varepsilon( \boldsymbol{\theta}_{{}^{(k)}\mathcal{G}^{(j*)}} ) > \epsilon$, the current verification sparsity level ${}^{(q)}d$ is considered insufficient to satisfy the minimum support sparsity required for sparse DOA estimation. Since SAPD-Guided Search cannot modify the verification sparsity level, the algorithm returns to the Main Loop, where the verification sparsity level is increased according to ${}^{(q+1)}d = {}^{(q)}d + 1$, and Subproblem Reinitialization updates the verification index set ${}^{(q+1)}\mathcal{G}^{(0)}$ according to the information provided by Global Compensation Prioritization. This procedure compensates for the insufficiency of the current verification sparsity level and its corresponding verification index set, and is referred to as the Compensation Process.
	
According to the definitions of GMR and UVR together with the corresponding \textbf{Observations}, the insufficiency of the current verification sparsity level mainly arises from two situations. First, multiple true DOAs may be contained within a single GMR due to the merging of adjacent mainlobes, whereas the previous Global Initialization or Subproblem Reinitialization fails to establish verification grid points for all of them. Second, some true DOAs may lie within UVRs, such that their corresponding verification grid points have not yet been initialized by the previous Global Initialization or Subproblem Reinitialization. It should be noted that these two situations may occur independently or simultaneously during a single execution of the SAPD Search Algorithm.
	
Based on the above analysis, since multiple GMRs and UVRs may coexist in the spatial spectrum, a compensation priority rule must be established for different GMR and UVR regions. Accordingly, Global Compensation Prioritization determines the corresponding compensation priorities according to the GMRs and UVRs identified from the spatial spectrum. Subproblem Reinitialization is then performed according to these priorities to generate the updated verification index set ${}^{(q+1)}\mathcal{G}^{(0)}$ for the subproblem corresponding to the verification sparsity level ${}^{(q+1)}d$.

\textbf{1) Global Compensation Prioritization:} We first describe the compensation priority rule in detail. To establish a reasonable compensation priority rule, it is necessary to impose constraints on the number of verification grid points that are allowed to be initialized within each region under the proposed framework. It should be emphasized that such constraints do not imply that the corresponding regions can contain only a limited number of true DOAs in a physical sense. Instead, they are algorithm-induced constraints constructed based on the structural characteristics of the spatial spectrum, and are introduced to characterize the upper bound on the number of verification grid points that may be initialized within each region under the proposed framework. Based on the above considerations, the following algorithm-induced constraints are introduced for GMRs and UVRs, respectively.
	
\textbf{GMR Capacity Constraint:} Each GMR $\mathcal{B}_i$ is allowed to initialize at most three verification grid points.
	
\textbf{Remarks:} Under the proposed algorithmic framework, each GMR is allowed to initialize at most three verification grid points, corresponding to the most complex mainlobe-merging scenario considered in this work. According to extensive observations of spatial spectrum structures, when a single GMR corresponds to more than three neighboring sources, it usually indicates that the angular separations among multiple sources simultaneously fall below the Rayleigh limit. In such cases, the corresponding GMR is typically formed by the severe merging of multiple adjacent mainlobes. Under this condition, the source discrimination capability of the proposed SAPD Search deteriorates significantly, making it difficult to guarantee reliable DOA estimation performance. Therefore, the upper bound on the number of verification grid points that can be initialized within a single GMR is set to three in the proposed framework.
	
\textbf{UVR Capacity Constraint:} Each UVR $\mathcal{V}_i$ is allowed to initialize at most one verification grid point.
	
\textbf{Remarks:} The UVR Capacity Constraint is established based on extensive observations of spatial spectrum structures. From both the energy distribution perspective and the definition of UVR, a UVR is characterized as a low-energy region located between two GMRs. If multiple sources simultaneously exist within the same UVR, their accumulated energy is more likely to generate a new local peak in that region, making it difficult to preserve the valley characteristic of the UVR. In such cases, the region is more likely to evolve into a new GMR rather than remain a UVR. In typical DOA estimation scenarios, where source power variations are relatively moderate and the spatial distribution of sources remains sufficiently sparse, multiple resolvable spectral peaks rarely appear within the same UVR. For extreme cases, such as multiple low-RCS (Radar Cross Section) targets simultaneously located between strong scatterers, resulting in highly imbalanced local energy distributions, the corresponding scenarios generally deviate from the conventional DOA estimation setting considered in this work and may introduce more challenging detection and resolution problems. Such situations are beyond the scope of the proposed framework. Based on the above observations of spatial spectrum structures, the maximum number of verification grid points that can be initialized within a single UVR is set to one under the proposed algorithmic framework.
	
In typical DOA estimation scenarios, when the mainlobes corresponding to multiple closely spaced incident signals merge together, the peak energy of the resulting GMR generally becomes more pronounced as the number of signals involved in the mainlobe merging increases. Consequently, GMRs formed by different degrees of mainlobe merging often exhibit noticeable differences in peak energy within the spatial spectrum. Based on this observed spectral characteristic, together with the previously introduced GMR Capacity Constraint, the proposed method further classifies different GMRs into several energy levels according to their peak energies. It should be emphasized that these energy levels do not directly correspond to the actual number of incident signals contained within a GMR. Instead, they are only used to determine whether different GMRs possess similar peak-energy levels and, consequently, whether they may exhibit a similar degree of mainlobe merging. This can serve as the basis for the subsequent compensation priority design.
	
We employ two thresholds, $\kappa_1$ and $\kappa_2$, to classify the peak energies $\tilde{P}_i$ of all GMRs in the normalized spatial spectrum into three levels. When $\tilde{P}_i < \kappa_1$, the corresponding GMR is classified as a Level-1 region. When $\kappa_1 \leq \tilde{P}_i < \kappa_2$, the corresponding GMR is classified as a Level-2 region. When $\kappa_2 \leq \tilde{P}_i \leq 1$, the corresponding GMR is classified as a Level-3 region. Here, Level-1, Level-2, and Level-3 denote three different energy levels. Accordingly, during Global Compensation Prioritization, Level-3, Level-2, and Level-1 regions are assigned high, medium, and low priority compensation levels, respectively. 
	
Because the spatial spectrum can be expressed as the linear superposition of the energies of multiple incident signals, the theoretical derivation is conducted under the equal-power signal assumption by considering the most conservative case in which all incident signals are completely overlapped. Under this assumption, the theoretical values are obtained as $\kappa_1 = 1/3$ and $\kappa_2 = 2/3$. In practical DOA estimation, however, target powers are generally unequal and multiple incident signals are rarely completely overlapped. Therefore, slightly larger empirical values, namely $\kappa_1 = 0.4$ and $\kappa_2 = 0.7$, are adopted to improve the robustness of the proposed algorithm. In our experiments, the proposed algorithm is not sensitive to small variations around these empirical values.

It should be noted that, since the energy-level classification is established based on the normalized peak spectrum energy $\tilde{P}$, at least one GMR always satisfy $\tilde{P}_i \geq \kappa_2$ and therefore be classified as a Level-3 region. The GMRs corresponding to the remaining energy levels, however, do not necessarily appear simultaneously within the same spatial spectrum. Consequently, the proposed energy-level classification is relative rather than absolute. Its purpose is to characterize the relative peak-energy relationships among different GMRs in the current spatial spectrum, as well as the potential differences in their corresponding mainlobe fusion degrees.
	
Based on the aforementioned GMR energy levels and the algorithmic constraints imposed on both GMRs and UVRs, the proposed compensation strategy establishes the following compensation rules. Since different candidate compensation regions correspond to different compensation priorities, the compensation priority rules are divided into three parts, the Hierarchical Structure Alignment Part, the UVR Compensation Part, and the Residual GMR Compensation Part.
	
According to the GMR Capacity Constraint, within the proposed algorithmic framework, the degree of mainlobe fusion associated with a GMR can be characterized by the number of initialized verification grid points contained within that GMR. Consequently, in the SAPD Search Scheme, GMRs belonging to the same energy level are generally considered to correspond to similar mainlobe fusion degrees. Therefore, the numbers of initialized verification grid points associated with such GMRs should be kept as consistent as possible.
	
\textbf{Hierarchical Structure Alignment Part:} The objective of the Hierarchical Structure Alignment Step is to ensure that the numbers of initialized verification grid points associated with different GMRs are consistent with the degrees of mainlobe fusion reflected by their corresponding energy levels. Based on this principle, GMRs belonging to the same energy level should contain, as much as possible, the same number of initialized verification grid points. If discrepancies exist among the numbers of initialized verification grid points associated with different GMRs within the same energy level, the largest number of initialized verification grid point currently observed within that energy level is taken as the reference, and compensation is performed for the remaining GMRs of the same level. For multiple GMRs requiring compensation within the same energy level, the compensation priority is determined according to their corresponding peak energies, where a higher peak energy corresponds to a higher compensation priority.
	
For GMRs belonging to different energy levels, new verification grid points are preferentially added to the GMRs with higher compensation priorities. Meanwhile, the differences in the numbers of initialized verification grid point among different energy levels are maintained to be consistent with their corresponding energy-level differences. Specifically, when the energy levels of two GMRs differ by one level, the numbers of initialized verification grid point associated with them are allowed to differ by one. When their energy levels differ by two levels, the corresponding initialized verification grid point numbers are allowed to differ by two.
	
It should be further noted that the compensation results obtained in this step do not necessarily correspond to the true numbers of physical sources. Instead, they are introduced solely to establish a reasonable fusion-structure initialization under the current spatial spectrum. For example, when all GMRs are classified into the same energy level, even if some GMRs may actually contain more physical sources, no additional compensation is required as long as the initialized verification grid point numbers are consistent among GMRs of the same level. In such a case, these GMRs are considered to exhibit the same degree of mainlobe fusion within the proposed framework. 
	
If all GMRs have satisfied the Hierarchical Structure Alignment rule through the Compensation Process, yet SAPD-Guided Search still fails to obtain a solution satisfying the prescribed recovery error tolerance, verification grid point compensation is subsequently performed within the UVRs.
	
\textbf{UVR Compensation Part:} All UVRs are ranked according to the energy values of their corresponding local minima, and their compensation priorities are determined accordingly. According to the UVR Capacity Constraint, a higher local-minimum energy generally indicates a higher probability that the corresponding UVR contains an uninitialized verification grid point. Consequently, the corresponding compensation priority is assigned to be higher.
	
If no UVR exists in the spatial spectrum, or if UVR compensation has been completed, yet SAPD-Guided Search still fails to obtain a solution satisfying the prescribed recovery error tolerance, verification grid point compensation is subsequently performed within the GMRs according to the Residual GMR Compensation rule.
	
\textbf{Residual GMR Compensation Part:} At this stage, all UVRs in the spatial spectrum have already been compensated and validated. Meanwhile, the Hierarchical Structure Alignment part is intended only to establish fusion-structure consistency among different GMRs, and therefore does not guarantee that all physical sources within each GMR have corresponding initialized verification grid points. Consequently, the remaining uninitialized verification grid points are still more likely to be associated with one or more GMRs. Based on this observation, the compensation priority is further determined according to the peak energies of the GMRs in descending order.
	
The above procedure constitutes the Global Compensation Prioritization process. It is executed only once before entering the Main Loop to establish the compensation priorities. During the Compensation Process in the Main Loop, Subproblem Reinitialization sequentially updates the verification index set according to the established compensation priorities until the prescribed recovery error tolerance is satisfied.
	
\textbf{Remarks:} It should be noted that the above compensation rule is derived under the equal-power signal assumption. When multiple incident signals $(K \ge 3)$ exhibit significant power differences, the corresponding spatial spectrum morphology may change considerably, and the associated compensation rule requires further analysis. This situation involves a more challenging multi-source unequal-power DOA estimation scenario, which is beyond the scope of this paper. Therefore, it is not further discussed herein.

\textbf{2) Subproblem Reinitialization:} According to the compensation priorities determined by Global Compensation Prioritization, Subproblem Reinitialization updates the verification index set $^{(q+1)}\mathcal{G}^{(0)}$ at each Compensation Process. For a GMR, if fewer than three verification grid points have been initialized, the newly introduced verification grid point is generated according to the same initialization rule adopted in Global Initialization. Once the third verification grid point is introduced into the GMR, its initialization position is directly assigned to the peak location of the corresponding GMR. For a UVR, the newly introduced verification grid point is initialized at the midpoint of the two UVR boundaries.

\subsection{Final DOA Estimation Output and The overall Framework of the SAPD Search Scheme}

After obtaining the on-grid solution ${}^{(q^*)}\mathcal{G}^{(j^*)}$, where $q^*$ denotes the final iteration index of the Main Loop, the corresponding off-grid parameter can be further estimated using any conventional off-grid refinement method, thereby obtaining the final DOA estimates. Since proposing a new method for estimating the off-grid parameter is not the primary objective of this work, a conventional local optimization procedure is adopted. Specifically, $\boldsymbol{x}_{ {}^{(q^*)}\mathcal{G}^{(j^*)} }$ and $\boldsymbol{\beta}_{ {}^{(q^*)}\mathcal{G}^{(j^*)} }$ are alternately updated using \eqref{x_theta_function} and \eqref{spatial_angular_pseudo_dev} until convergence. The final estimation output is 
\begin{equation}
	\hat{\boldsymbol{\vartheta}} = \boldsymbol{\vartheta}_{{}^{(q^*)}\mathcal{G}^{(j^*)}} + \boldsymbol{\beta}_{ {}^{(q^*)}\mathcal{G}^{(j^*)} }
\end{equation}

The overall procedure of the proposed SAPD Search Algorithm is summarized as follows. First, the spatial spectrum is constructed from the observation data using Bartlett beamforming. Global Initialization is then performed to identify the GMRs and UVRs in the spatial spectrum and generate the initial verification index set ${}^{(0)}\mathcal{G}^{(0)}$ together with its corresponding verification sparsity level ${}^{(0)}d$. Subsequently, Global Compensation Prioritization establishes the compensation priorities of the GMRs and UVRs according to the spatial spectrum structure.
	
The algorithm then enters the Main Loop, where SAPD-Guided Search solves the fixed-sparsity subproblem corresponding to the current verification sparsity level. If the obtained solution does not satisfy the prescribed recovery error tolerance, the verification sparsity level is increased according to ${}^{(q+1)}d = {}^{(q)}d + 1$, and Subproblem Reinitialization updates the verification index set ${}^{(q+1)}\mathcal{G}^{(0)}$ according to the established compensation priorities. This process continues until an on-grid solution ${}^{(q^*)}\mathcal{G}^{(j^*)}$ satisfying the prescribed recovery error tolerance $\epsilon$ is obtained for the first time.
	
Finally, the obtained on-grid solution is used to initialize the DOA Refinement stage, where the corresponding off-grid parameter is estimated to further improve the DOA estimation accuracy. The complete procedure of the proposed SAPD Search Algorithm is summarized in \ref{alg:SAPD_Search_Scheme}.
	
\begin{algorithm}
	\caption{SAPD Search Algorithm}
	\label{alg:SAPD_Search_Scheme}
		
	\begin{algorithmic}[1]
			
		\Require Observation data $\mathbf{y}\in\mathbb{C}^{M}$
			
		\Ensure Estimated DOAs $\hat{\boldsymbol{\vartheta}}$
			
		\State Perform Global Initialization via Algorithm~\ref{alg:1} to get the initial verification index set ${}^{(0)}\mathcal{G}^{(0)}$ and verification sparsity level ${}^{(0)}d$
			
		\State Perform Global Compensation Prioritization to obtain the compensation prioritization for the GMRs and UVRs
			
		\For{$q=0,1,\cdots$}
			
		\State Execute SAPD-Guided Search via Algorithm~\ref{SAPD_Search_Step}
			
		\State Obtain the on-grid solution ${}^{(q)}\mathcal{G}^{(j^*)}$ and flag $F_s$
			
		\If{$F_s=1$}
			
		\State Update the verification sparsity level ${}^{(q+1)}d={}^{(q)}d+1$

		\State Perform Subproblem Reinitialization by the compensation prioritization to obtain
		${}^{(q+1)}\mathcal{G}^{(0)}$
			
		\Else
		\State Set $q^*=q$
			
		\State Obtain the on-grid solution ${}^{(q^*)}\mathcal{G}^{(j^*)} $
			
		\State Perform DOA Refinement to estimate off-grid parameter $\boldsymbol{\beta}_{ {}^{(q^*)}\mathcal{G}^{(j^*)} }$.
			
		\State $\hat{\boldsymbol{\vartheta}} = \boldsymbol{\vartheta}_{{}^{(q^*)}\mathcal{G}^{(j^*)}} + \boldsymbol{\beta}_{ {}^{(q^*)}\mathcal{G}^{(j^*)} }$
			
		\State \textbf{break}
			
		\EndIf
			
		\EndFor
			
		\State \Return $\hat{\boldsymbol{\vartheta}}$
			
	\end{algorithmic}
		
\end{algorithm}

\subsection{Computational complexity analysis}
	
This subsection analyzes the computational complexity of the proposed SAPD Search Algorithm. As shown in \textbf{Alg.~\ref{alg:SAPD_Search_Scheme}}, the computational complexity of the proposed algorithm is mainly dominated by SAPD-Guided Search and the two least-squares computations in the DOA Refinement stage, i.e., \eqref{x_theta_function} and \eqref{spatial_angular_pseudo_dev}. Since the DOA Refinement stage is not the primary focus of this work, only the computational complexity of a single search iteration in SAPD-Guided Search and that of the overall Main Loop are analyzed. 
	
The computational complexity of each search iteration in SAPD-Guided Search is $O(M{}^{(q)}d^2)$. In contrast, the per-iteration computational complexity of conventional $\ell_1$-norm minimization methods and sparse Bayesian methods is typically $O(G^3)$. In sparse DOA estimation, it generally holds that $G \gg {}^{(q)}d$, $G > M$, and $M > {}^{(q)}d$. Meanwhile, the per-iteration computational complexity of IAA-APES is $O(M^2G)$. Therefore, the computational complexity of a single search iteration in SAPD-Guided Search is significantly lower than those of the aforementioned algorithms. Accordingly, the overall computational complexity of the Main Loop can be expressed as
\begin{equation}
	O(\sum_{q = 0}^{q^*}{}^{(q)}\tilde{j}M{}^{(q)}d^2),
\end{equation}
where ${}^{(q)}\tilde{j}$ denotes the maximum number of search iterations performed by SAPD-Guided Search under the verification sparsity level ${}^{(q)}d$. Since ${}^{(q)}d \ll G$ generally holds, the overall computational complexity of the Main Loop remains significantly lower than that of existing overcomplete dictionary-based optimization methods.
	
Since the overall computational complexity of different algorithms is affected by multiple factors, a fair comparison cannot be made solely based on theoretical complexity. Therefore, the practical computational efficiency of the proposed algorithm will be further compared with existing methods in terms of runtime in the experimental section.

\section{Numerical Simulations and Experimental Validation} \label{Sec6}

This section provides a comprehensive validation of the proposed algorithm through both numerical simulations and real-world experiments. First, visual illustrations are presented to demonstrate the key procedures of the SAPD Search Algorithm. Subsequently, the estimation performance of the proposed algorithm is systematically evaluated in terms of estimation accuracy, angular resolution, computational complexity, and robustness under challenging scenarios. Furthermore, the effectiveness of the proposed Compensation Rules is verified through a representative example, followed by a sensitivity analysis of the recovery error tolerance $\epsilon$. Finally, the effectiveness of the proposed algorithm in practical applications is validated using real radar data collected by a TI AWR1843 mmWave radar system.

All experiments are conducted on a PC equipped with a 2.6 GHz Intel Core i7 processor and 16 GB RAM. The proposed and comparable algorithms are implemented in Python. The source amplitudes are generated from a normal distribution, i.e., $s(t) \sim \mathcal{N}(30, 5)$, unless stated otherwise. The spatial grid is defined over $\boldsymbol{\theta} = [-90^\circ, 90^\circ]$ with a grid interval of  $\Delta \theta = 1^\circ$. In all experiments, no prior knowledge of the number of incident sources is provided to the proposed method.

To quantify the estimation accuracy, the root-mean-square error (RMSE) is adopted as the primary performance metric. We define the RMSE as follows
\begin{equation}
	\mathrm{RMSE} = \sqrt{ \frac{1}{M_t K} \sum_{i = 1}^{M_t} \| \hat{\boldsymbol{\theta}}^{(i)} - \boldsymbol{\theta}^* \|_2^2 },
\end{equation}
where $\hat{\boldsymbol{\theta}}^{(i)}$ represents the estimate obtained in the $i$-th Monte Carlo trial, and $M_t$ is the total number of trials.

\subsection{ Visual Illustrations of the Key Procedures}

This subsection provides visual illustrations of the SAPD-Guided Search and the Compensation Procedure in the proposed SAPD Search Algorithm. An 8-element ULA with half-wavelength inter-element spacing is adopted throughout this subsection.

\begin{figure}[!t]
	\centering
	\subfloat[\small]
	{\includegraphics[width=3.0 in]{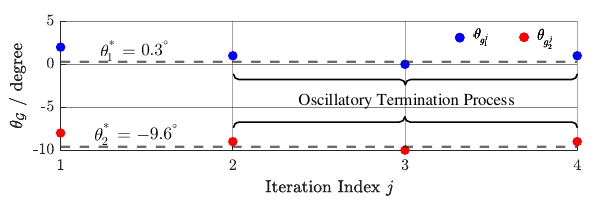}
		\label{A1_Iter_theta}} 
	\\
	\subfloat[\small]{\includegraphics[width=3.0 in]{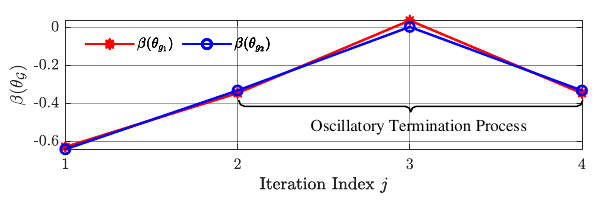}
		\label{A1_Iter_beta}} 
	\\
	\subfloat[\small]{\includegraphics[width=3.0 in]{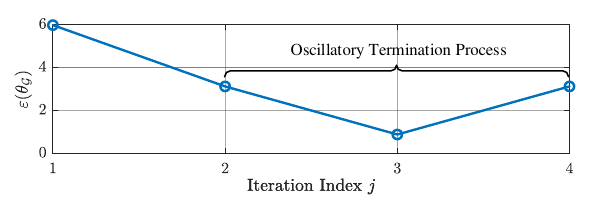}
		\label{A1_Iter_ep}} 
	\caption{Visualization of the proposed SAPD-Guided Search. (a) Evolution of candidate angles. (b) Evolution of grid bias. (c) Evolution of recovery error.}
	\label{Behavior_SGS}
\end{figure}

\textbf{Example A1 (Search Process of the SAPD-Guided Search)}: This example illustrates the search process of the proposed SAPD-Guided Search. A representative scenario containing two true DOAs, located at $\theta_1^* = 0.3^\circ$ and $\theta_2^* = -9.6^\circ$, is considered. The SNR is set to $15$dB.

Fig. \ref{Behavior_SGS}(a) illustrates the evolution of the candidate angles $\boldsymbol{\vartheta}_{\mathcal{G}^{(j)}}$ during the search process. It can be observed that the SAPD-Guided Search reaches a solution satisfying the SAPD Constraints after only four search steps. The subsequent three search steps enter an oscillatory stage, and the search is finally terminated according to the proposed oscillation-based termination criterion. 

Fig. \ref{Behavior_SGS}(b) shows the corresponding evolution of the grid bias $\boldsymbol{\beta}({\boldsymbol{\vartheta}_{\mathcal{G}^{(j)}}})$. It can be observed that, from the second to the third search step, the sign of each component changes while its magnitude further decreases. This behavior indicates that the search result gradually satisfies the SAPD Constraints.

Fig. \ref{Behavior_SGS}(c) presents the corresponding recovery error $\varepsilon({\boldsymbol{\vartheta}_{\mathcal{G}^{(j)}}})$. The recovery error continuously decreases during the first three search steps and then exhibits periodic oscillation in the oscillatory termination stage, which is consistent with the search trajectory shown in Fig. \ref{Behavior_SGS}(a).

\textbf{Example A2 (Illustration of the Compensation Process)}: This example provides a visual illustration of the proposed Compensation Process. A representative scenario requiring compensation is considered, where two true DOAs are located at $\theta_1^* = 0.3^\circ$ and $\theta_2^* = -7.6^\circ$. The SNR is set to $15$dB.

Fig. \ref{Comp_B}(a) illustrates the verification index sets before and after the Compensation Process. The verification index set obtained by the Global Initialization is ${}^{(0)}\mathcal{G}^{(0)} = \{ 89 \}$, corresponding to the verification grid point  $\theta_{{}^{(0)}g^{(0)}_1} = -2^\circ$. After the Compensation Process, the verification sparsity level is increased from ${}^{(0)}d$ to ${}^{(1)}d$, and Subproblem Reinitialization updates the verification index set to ${}^{(1)}\mathcal{G}^{(0)} = \{ 85, 93 \}$, corresponding to the verification grid points $\theta_{{}^{(1)}g^{(0)}_1} = -6^\circ$ and $\theta_{{}^{(1)}g^{(0)}_2} = 2^\circ$.

\begin{figure}[!t]
	\centering
	\subfloat[\small]
	{\includegraphics[width=3.0 in]{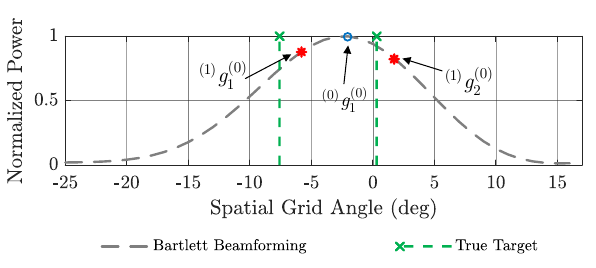}
		\label{A2_SS}} 
	\\
	\subfloat[\small]{\includegraphics[width=3.0 in]{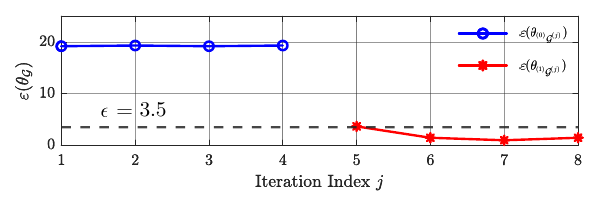}
		\label{A2_Iter_C}} 
	\\
	\caption{Visualization of the proposed Compensation procedure. (a) Subproblem initialization. (b) Recovery error during the Compensation procedure.}
	\label{Comp_B}
\end{figure}

Fig. \ref{Comp_B}(b) shows the corresponding recovery error $\varepsilon{\boldsymbol{\theta}_{{}^{(k)}\mathcal{G}^{(j)}}}$ during the search process. In Fig. \ref{Comp_B}(b), the first four search steps correspond to the SAPD-Guided Search initialized from the verification index set ${}^{(0)}\mathcal{G}^{(0)}$ obtained by the Global Initialization. It can be observed that, although the search converges to a local minimum satisfying the SAPD Constraints, the corresponding recovery error fails to satisfy the prescribed recovery error tolerance $\epsilon$. Consequently, the Main Loop increases the verification sparsity level and performs Subproblem Reinitialization. The fifth to eighth search steps correspond to the SAPD-Guided Search after Subproblem Reinitialization. It can be observed that the obtained solution not only satisfies the SAPD Constraints but also satisfies the recovery error tolerance $\epsilon$ for the first time. Therefore, this solution is accepted as the final on-grid DOA estimation result.

\subsection{Estimation Performance}

The overall performance of the proposed SAPD Search Algorithm is evaluated in this subsection. Unless otherwise specified, all RMSE results are computed over $M_t = 1000$ 
independent Monte Carlo trials. 

The overall performance of the proposed SAPD Search Algorithm is evaluated from six aspects, including estimation accuracy under different SNRs, angular resolution, array size, source number, multi-source scenarios, and source power imbalance.

Unless otherwise specified, the parameters of the proposed SAPD Search Algorithm are set as follows. The recovery tolerance is set to $\epsilon = 7.5$, which is determined according to the transmitted signal $s(t)$. The compensation term for beamwidth is set to $\delta = 2^\circ$. The two GMR energy-level thresholds are empirically set to $\kappa_1 = 0.4$ and $\kappa_1 = 0.7$. 

Since the primary objective of this work is to reduce the computational complexity of sparse DOA estimation while maintaining high estimation accuracy, rather than to develop a new off-grid parameter estimation method, the comparison algorithms are selected according to this principle. For on-grid sparse DOA estimation, the trimmed LASSO solved by GSM \cite{amir2021trimmed} is adopted as the high-accuracy benchmark, while IAA-APES \cite{IAA_APES} and OMP \cite{yang2018sparse} are selected as representative low-complexity sparse DOA estimation algorithms. To eliminate the influence of off-grid errors on the comparison, the same off-grid parameter refinement adopted in this work is applied to all on-grid algorithms. For off-grid sparse Bayesian methods, OGSBI \cite{OGSBI} and GE \cite{GE} are included for comparison. For gridless methods, the original ANM \cite{yang2018sparse} and the low-complexity yet high-accuracy VALSE \cite{VALSE} algorithm are adopted as representative comparison methods.

In addition, the deterministic Cram\'{e}r–Rao Bound (CRLB) \cite{CRB} is adopted as the theoretical performance benchmark for evaluating the estimation accuracy of all algorithms. Since the deterministic CRB is a well-established result, its formulation is not repeated here and can be found in \cite{CRB}.

\begin{figure*}[!t]
	\centering
	\subfloat[\small]
	{\includegraphics[width=3.1 in]{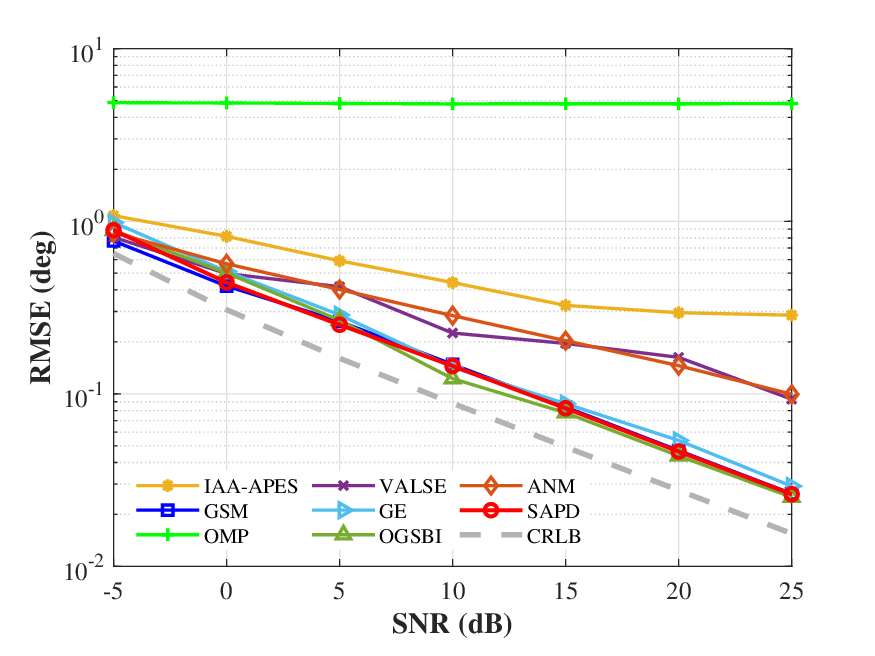}
		\label{B1_RMSE_SNR}} 
	\hfil
	\subfloat[\small]{\includegraphics[width=3.1 in]{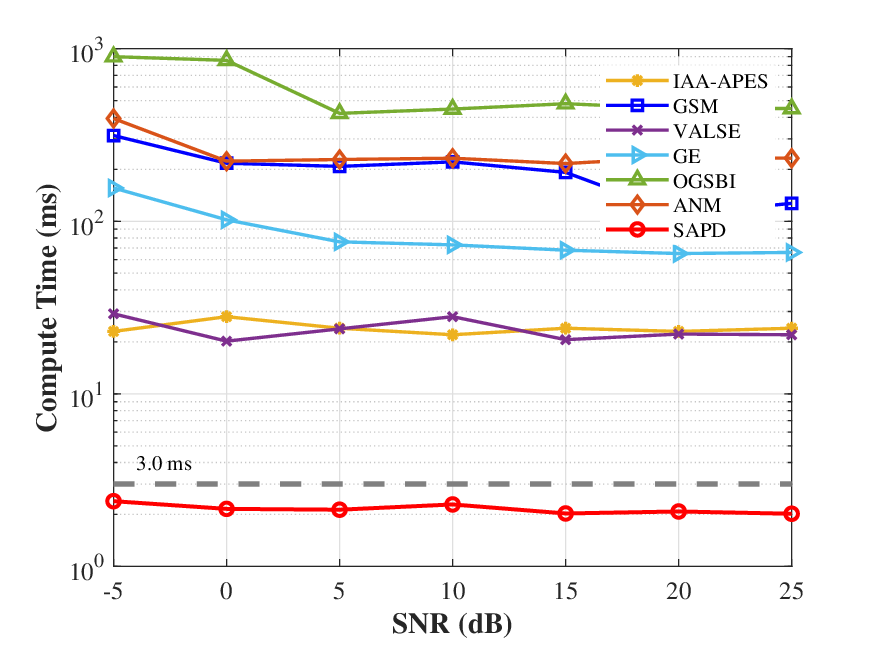}
		\label{B1_RMSE_SNR_CT}} 
	\caption{RMSE performance comparison versus SNR for two incident sources $\theta_1^* = 0.12^\circ$, $\theta_2^*  = 7.78^\circ$. (a) RMSE versus SNR (b) Compute Time.}
	\label{B1}
\end{figure*}

\textbf{Example B1 (RMSE Performance versus SNR)}: This example evaluates the RMSE performance of the proposed SAPD Search Algorithm under varying SNR conditions. The two fixed incident angles are set to $0.12^\circ$ and $7.78^\circ$, respectively. The SNR varies from $-5$dB to $25$dB with a step size of $5$dB. The number of array sensors of the ULA is fixed at $8$.

As shown in Fig. \ref{B1}(a), the RMSE of the proposed SAPD Search Algorithm continuously decreases as the SNR increases, indicating that its DOA estimation accuracy improves with increasing signal-to-noise ratio. Among the compared algorithms, the proposed SAPD Search Algorithm achieves RMSE performance comparable to that of GSM, OGSBI, and GE, while consistently outperforming IAA, VALSE, and ANM. According to the RMSE results of OMP, it fails to provide valid DOA estimation under the considered experimental scenario.

Fig. \ref{B1}(b) compares the average runtime of different algorithms under the same experimental conditions. Since OMP fails to produce valid DOA estimates, its runtime is not reported. It can be observed that the proposed SAPD Search Algorithm requires less than $3$ms on average for a single estimation. The runtimes of IAA and VALSE are approximately $20–30$ms, whereas OGSBI exhibits the highest computational cost. Although GE reduces the computational complexity to some extent through the grid evolution strategy, its runtime remains significantly higher than that of the proposed SAPD Search Algorithm, exceeding it by at least one order of magnitude. Moreover, the proposed SAPD Search Algorithm exhibits relatively low sensitivity to SNR variations.

By jointly considering Fig. \ref{B1}(a) and Fig. \ref{B1}(b), it can be concluded that the proposed SAPD Search Algorithm achieves estimation accuracy comparable to that of high-accuracy algorithms while significantly reducing the computational runtime.

\begin{figure*}[!t]
	\centering
	\subfloat[\small]
	{\includegraphics[width=3.1 in]{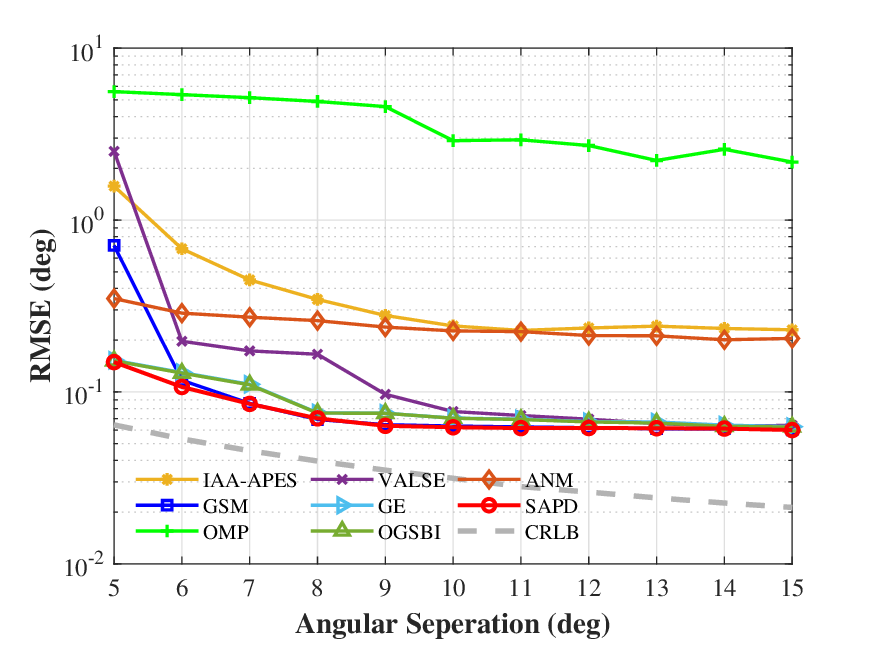}
		\label{B2_RMSE_Sep}} 
	\hfil
	\subfloat[\small]{\includegraphics[width=3.1 in]{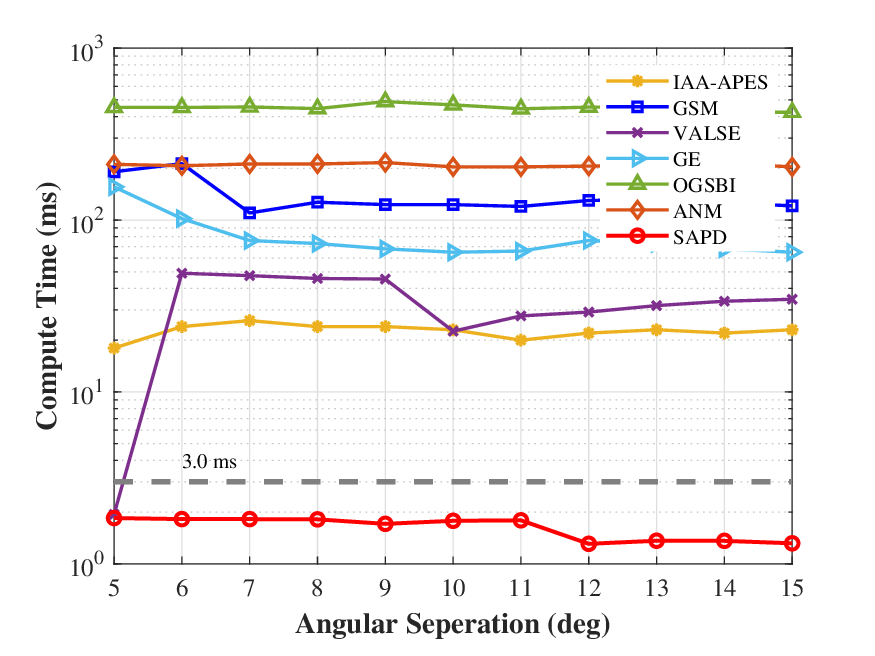}
		\label{B2_CT_Sep}} 
	\caption{Estimation performance versus angular separation. (a) RMSE (b) Compute Time.}
	\label{B2}
\end{figure*}

\textbf{Example B2 (Angular Resolution Performance)}: This example evaluates the DOA estimation performance of the proposed SAPD Search Algorithm under different angular separations, thereby validating its angular resolution capability. The SNR is fixed at 15 dB. The first true DOA is set as $\theta_1^* = 0^\circ + \zeta$, where $\zeta$ is a randomly generated off-grid parameter whose range is $[-\Delta\theta / 2, \, \Delta\theta / 2]$. The second true DOA is set as $\theta_2^* = \theta_1^* + \Delta \gamma$, where $\Delta \gamma$denotes the angular separation between the two signals. The angular separation varies from $5^\circ$ to $15^\circ$ with a step size of $1^\circ$. The number of array sensors of the ULA is fixed at $8$.

As shown in Fig. \ref{B2}(a), the RMSE of the proposed SAPD Search Algorithm continuously decreases as the angular separation increases. Throughout the entire angular separation range, its RMSE remains the closest to the CRLB. Among the low-complexity algorithms, the RMSE of both IAA and VALSE is consistently higher than that of the proposed SAPD Search Algorithm, while OMP still fails to provide valid DOA estimation under the considered experimental scenario. For high-accuracy but computationally expensive algorithms, including GSM, OGSBI, GE, and ANM, their RMSE is lower than that of the proposed SAPD Search Algorithm within the Rayleigh limit. However, beyond the Rayleigh limit, the proposed SAPD Search Algorithm achieves the same RMSE performance as GSM, OGSBI, GE, and VALSE.

\begin{figure*}[!t]
	\centering
	\subfloat[\small]
	{\includegraphics[width=3.1 in]{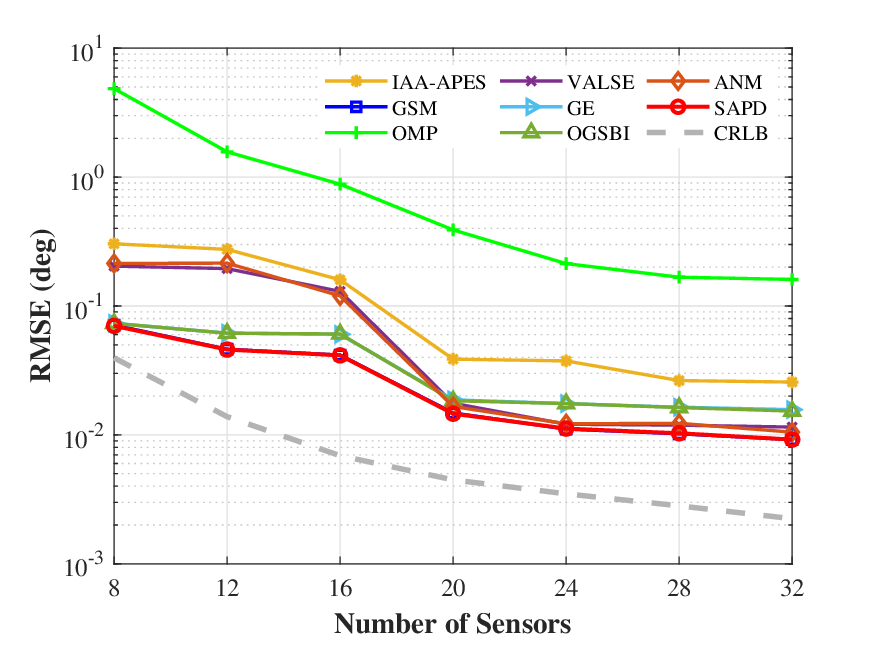}
		\label{B3_RMSE_Ap}} 
	\hfil
	\subfloat[\small]{\includegraphics[width=3.1 in]{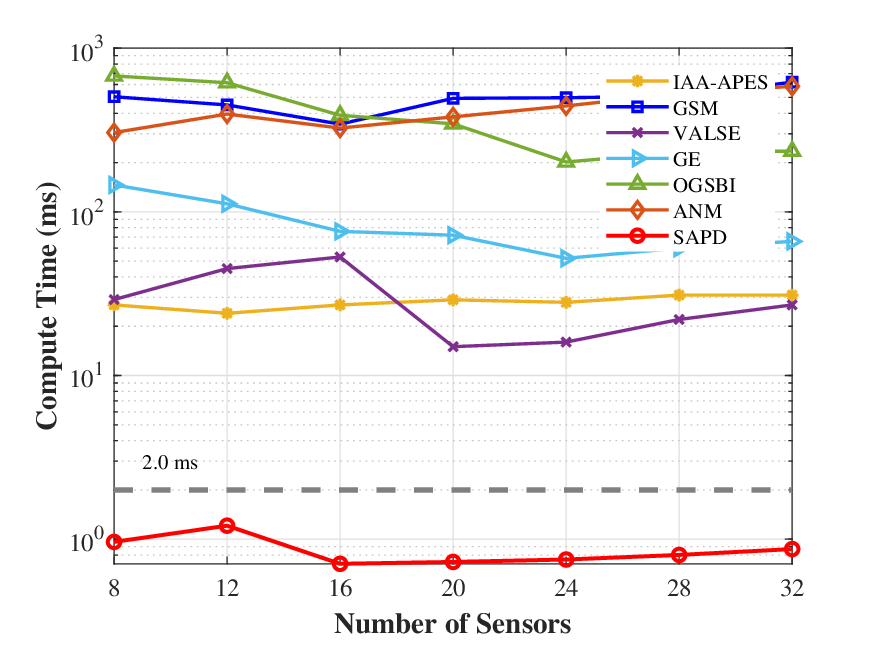}
		\label{B3_CT_Ap}} 
	\caption{Estimation performance versus the number of sensors. (a) RMSE (b) computational time.}
	\label{B3}
\end{figure*}

As shown in Fig. \ref{B2}(b), the average runtime of different algorithms under varying angular separations is presented. Since OMP fails to provide valid DOA estimation, its runtime is not reported. Throughout the entire angular separation range, the average runtime of the proposed SAPD Search Algorithm remains below 3 ms, which is at least one order of magnitude lower than that of the other compared algorithms. For VALSE, the estimation fails when the angular separation is $5^\circ$. Therefore, the corresponding runtime at this point is not considered meaningful for comparison. In addition, when the angular separation exceeds $11^\circ$ , the runtime of the proposed SAPD Search Algorithm exhibits a slight decrease. This is because the Compensation Process is no longer required under this condition, resulting in a further reduction in the overall computational cost.

By jointly considering Fig. \ref{B2}(a) and Fig. \ref{B2}(b), it can be observed that the proposed SAPD Search Algorithm maintains high angular resolution capability while preserving low computational complexity, thereby achieving high-accuracy DOA estimation.

\textbf{Example B3 (RMSE Performance versus the number of senosrs)}: This example evaluates the DOA estimation performance of the proposed SAPD Search Algorithm under different numbers of array sensors. The SNR is fixed at $15$dB. The two true DOAs are fixed at $\theta_1^* = 0.12^\circ$ and $\theta_2^*  = 7.88^\circ$. The number of array sensors $M$ is increased from $8$ to $32$ with a step size of $4$.

As shown in Fig. \ref{B3}(a), the RMSE of the proposed SAPD Search Algorithm continuously decreases as the number of array sensors increases. Throughout the entire range of array sizes, its RMSE remains the closest to the CRLB. Among the low-complexity algorithms, the RMSE of both IAA and VALSE is consistently higher than that of the proposed SAPD Search Algorithm. Although OMP is able to provide valid DOA estimation when the number of array sensors becomes sufficiently large, its RMSE remains approximately one order of magnitude higher than that of the other compared algorithms.

As shown in Fig. \ref{B3}(b), the average runtime of different algorithms under varying numbers of array sensors is presented. Consistent with the previous two experiments, the proposed SAPD Search Algorithm is still able to achieve high-accuracy DOA estimation within $2$ms. It can also be observed that, when the number of array sensors is $8$ or $12$, the runtime is slightly higher due to the execution of the Compensation Process. As the number of array sensors increases from $16$ to $32$, the runtime exhibits an approximately linear growth. This trend is consistent with the computational complexity analysis presented in the previous section.

By jointly considering Fig. \ref{B3}(a) and Fig. \ref{B3}(b), it can be observed that, under varying numbers of array sensors, the proposed SAPD Search Algorithm is able to maintain low computational complexity while achieving high-accuracy DOA estimation.

\begin{figure*}[!t]
	\centering
	\subfloat[\small]
	{\includegraphics[width=3.1 in]{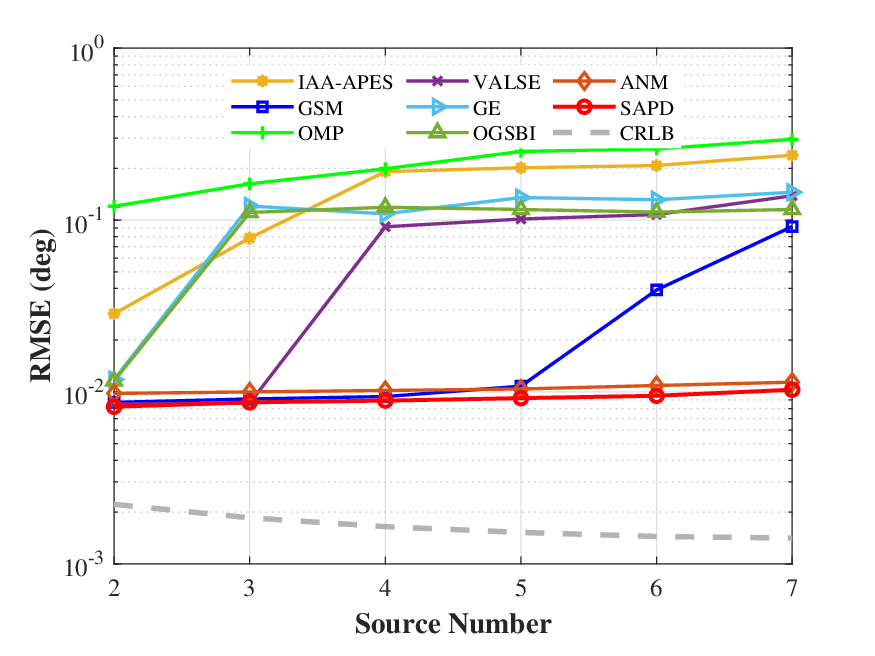}
		\label{B4_RMSE_Sn}} 
	\hfil
	\subfloat[\small]{\includegraphics[width=3.1 in]{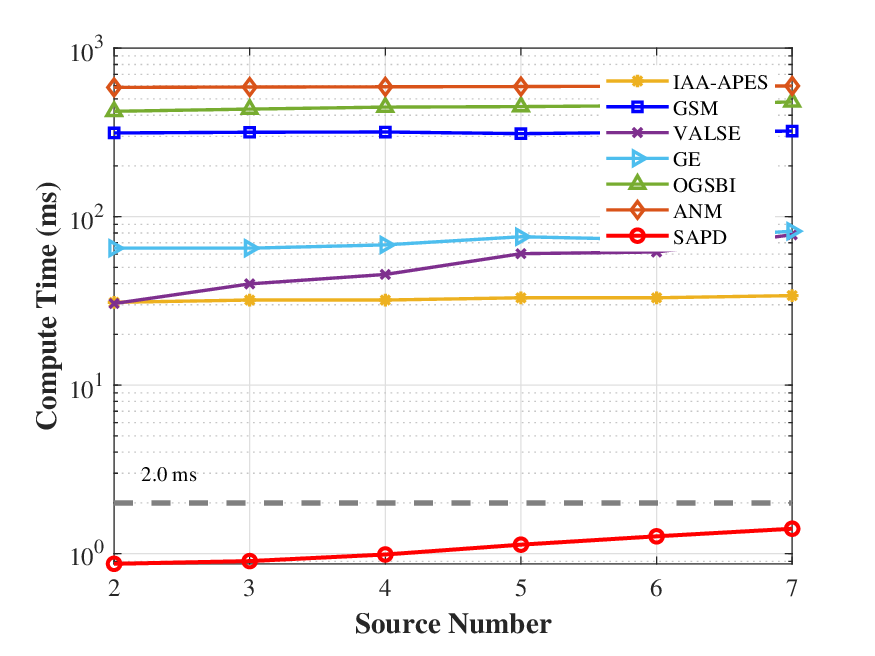}
		\label{B4_CT_Sn}} 
	\caption{Estimation performance versus Source Number. (a) RMSE (b) Compute Time.}
	\label{B4}
\end{figure*}

\textbf{Example B4 (RMSE Performance versus the Number of Sources)}: This example evaluates the DOA estimation performance of the proposed SAPD Search Algorithm under different numbers of incident sources. The SNR is fixed at $15$dB. To avoid the influence of an insufficient number of array sensors on the compared algorithms and to ensure a fair comparison, the number of array sensors of the ULA is fixed at $32$. The number of incident sources is increased from $2$ to $7$. All true DOAs are randomly generated within the range of $[-60^\circ, \, 60^\circ]$, while the angular separation between every two adjacent incident sources is fixed at $15^\circ$.

As shown in Fig. \ref{B4}(a), the RMSE of the proposed SAPD Search Algorithm increases slightly as the number of incident sources increases, following a trend similar to that of the other compared algorithms. Nevertheless, throughout the entire experimental range, the RMSE of the proposed SAPD Search Algorithm remains the closest to the CRLB. Among the low-complexity algorithms, the RMSE of OMP is consistently approximately one order of magnitude higher than that of the proposed SAPD Search Algorithm. Moreover, when $K \ge 3$, the RMSE of both IAA and VALSE increases significantly and remains approximately one order of magnitude higher than that of the proposed SAPD Search Algorithm. For the high-accuracy algorithms, the RMSE of GSM also begins to increase noticeably when the number of incident sources $K > 5$.

As shown in Fig. \ref{B4}(b), the average runtime of different algorithms under varying numbers of incident sources is presented. The proposed SAPD Search Algorithm is still able to achieve high-accuracy DOA estimation within $2$ms, while its runtime is at least one order of magnitude lower than that of the other compared algorithms. It can also be observed that the runtime of the proposed SAPD Search Algorithm gradually increases as the number of incident sources increases. This trend is consistent with the computational complexity analysis presented in the previous section.

By jointly considering Fig. \ref{B4}(a) and Fig. \ref{B4}(b), it can be observed that, under varying numbers of incident sources, the proposed SAPD Search Algorithm is able to maintain low computational complexity while achieving high-accuracy DOA estimation.

\begin{figure}[!t]
	\centering
	\includegraphics[width=3.1in]{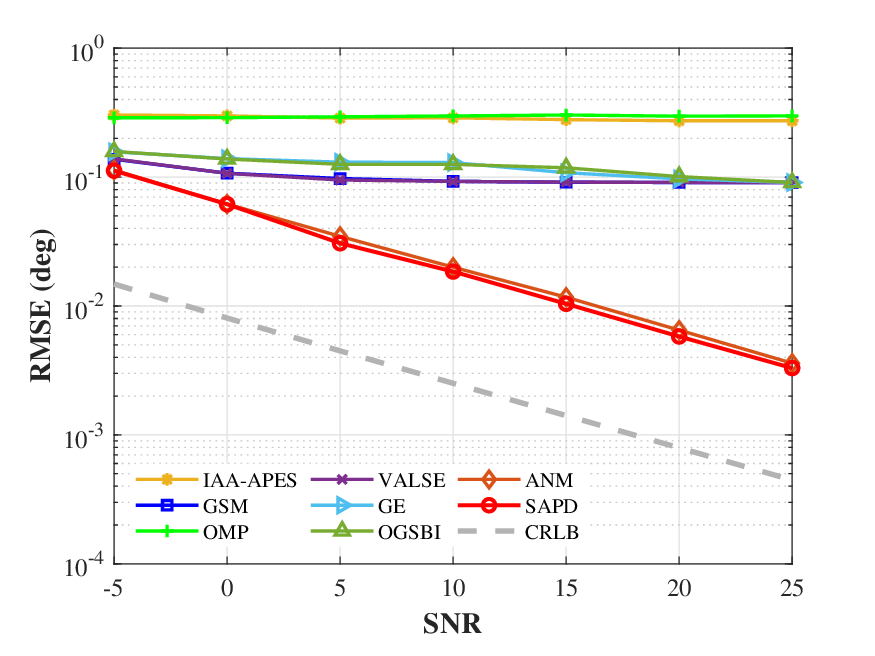}
	\caption{\centering{RMSE versus SNR for the seven-source scenario.}}
	\label{B6_RMSE_SNR_MS}
\end{figure}

\textbf{Example B5 (RMSE Performance versus SNR with Seven Sources)}: This example evaluates the DOA estimation performance of the proposed SAPD Search Algorithm under different SNRs in a multi-source scenario. To avoid the influence of an insufficient number of array sensors on the compared algorithms and to ensure a fair comparison, the number of array sensors of the ULA is fixed at $32$, following the same setting as in Example B4. The number of incident sources is fixed at $7$, and the signal generation procedure is identical to that used in Example B4. The SNR is varied from $-5$dB to $25$dB with a step size of $5$dB.

As shown in Fig. \ref{B6_RMSE_SNR_MS}, the RMSE of the proposed SAPD Search Algorithm continuously decreases as the SNR increases and remains the closest to the CRLB throughout the entire SNR range. Under this experimental scenario, ANM is the only compared algorithm that achieves estimation accuracy comparable to that of the proposed SAPD Search Algorithm, while the RMSE of all the other compared algorithms is significantly higher. These results demonstrate that the proposed SAPD Search Algorithm is able to maintain high-accuracy DOA estimation and exhibits strong robustness against noise even in multi-source scenarios.

\textbf{Example B6 (RMSE Performance versus Source Power Difference)}: This example evaluates the DOA estimation performance of the proposed SAPD Search Algorithm under different source power differences. This example is designed to simulate practical scenarios where two targets exhibit significantly different radar cross sections (RCSs), resulting in large differences in the received echo power.

The SNR is fixed at $15$dB, and the ULA consists of $8$ array sensors. The two true DOAs are fixed at $\theta_1^* = 0.12^\circ$ and $\theta_2^* = 7.88^\circ$. Let $A_1$ denote the amplitude of the first target echo. The amplitude of the second target echo is given by $A_2 = A_1 \cdot 10^{-\Delta P / 20}$, where $\Delta P$ denotes the power difference between the two targets. The value of $\Delta P$ varies from $0$dB to $10$dB with a step size of $2$dB.

As shown in Fig. \ref{B5_RMSE_Sd}, the RMSE of the proposed SAPD Search Algorithm gradually increases as the source power difference increases, and the other compared algorithms exhibit the same trend. Nevertheless, throughout the entire range of source power differences, the proposed SAPD Search Algorithm consistently achieves the lowest RMSE. These results demonstrate that the proposed SAPD Search Algorithm is robust against source power differences.

\begin{figure}[!t]
	\centering
	\includegraphics[width=3.1in]{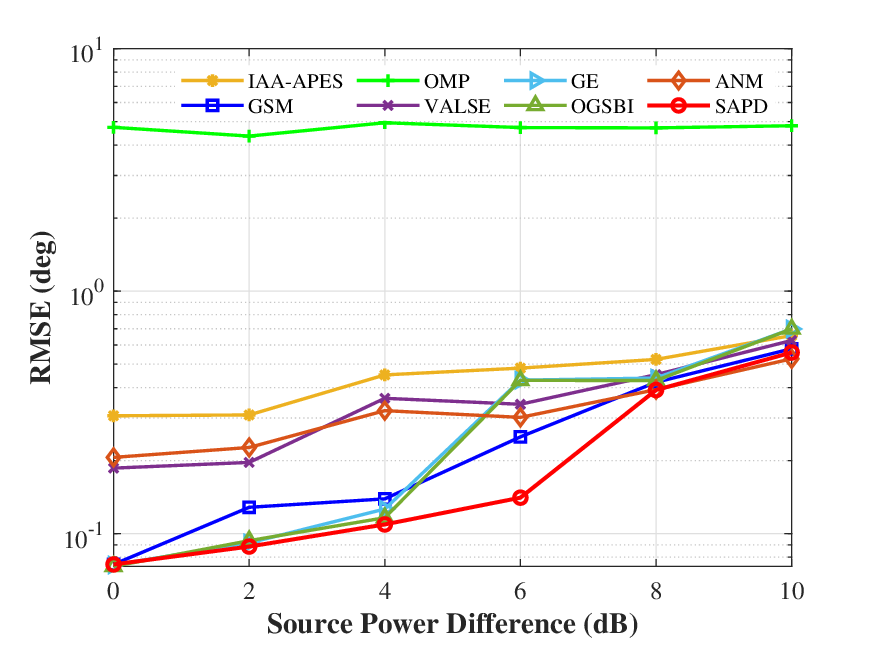}
	\caption{\centering{RMSE versus Source Power Differences.}}
	\label{B5_RMSE_Sd}
\end{figure}

From the above experimental results, it can be concluded that the proposed SAPD Search Algorithm consistently achieves high-accuracy DOA estimation while maintaining significantly lower computational complexity than existing sparse DOA estimation methods. Furthermore, the proposed algorithm exhibits stable estimation performance under different SNRs, array sizes, source numbers, and source power differences, demonstrating its robustness in practical radar applications.

\subsection{Effectiveness of the Compensation Rules}

This subsection presents a representative example to demonstrate that supplementing the verification index set according to the proposed Compensation Rules provides a more reliable verification index set than using Random Initialization for the subproblem reinitialization, thereby validating the necessity of the proposed Compensation Rules. In this experiment, an 8-element ULA is considered under an SNR of 15dB. The true DOAs are set as $\boldsymbol{\theta}^* = \{ -30^\circ, -20^\circ, -10^\circ, 37^\circ, 45^\circ \}$. 

As shown in Fig. \ref{Norm_ss_5_Target}, in the initial spatial spectrum, the three closely spaced sources $\{ -30^\circ, -20^\circ, -10^\}$ cause the source located at $-20^\circ$ to fail to form an independent peak and instead fall into the UVR generated by the adjacent sources. Consequently, it is ignored during the Global Initialization Step. Meanwhile, the two closely spaced sources $\{ 37^\circ, 45^\circ \}$ form only a single isolated-GMR, whose generalized beamwidth exceeds the criterion given in Observation 2. Therefore, the cardinality of the initial verification index set satisfies $\vert {}^{\{0\}} \mathcal{G} ^{\{0\}} \vert = 4$ rather than the true number of five incident sources.

\begin{figure}[!t]
	\centering
	\includegraphics[width=3.2in]{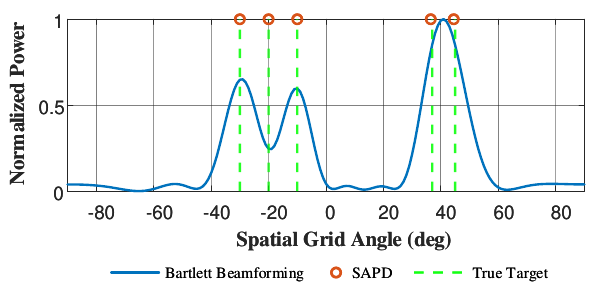}
	\caption{\centering{Normalized spatial spectra for five targets with DOAs of $\{ -30^\circ, -20^\circ, -10^\circ, 37^\circ, 45^\circ \}$ at SNR = 15dB}}
	\label{Norm_ss_5_Target}
\end{figure}

\begin{table}[H]
	\caption{Comparison of Compensation Strategies}
	\centering
	\begin{tabular}{ccccc}
		\toprule
		\textbf{Methods} &\textbf{Random Initialization} & \textbf{SAPD search} \\
		\midrule
		\textbf{RMSE} & $-$  & $\boldsymbol{0.2724^{\circ}}$ \\
		\bottomrule
	\end{tabular}
	\label{RMSE_5_Greedy}
\end{table}

After the first SAPD-Guided Search, the recovery error of the current solution still fails to satisfy the recovery tolerance $\epsilon$. Therefore, the Compensation Process is invoked in the Main Loop, where the missing verification grid points are supplemented according to the proposed Compensation Rules. Subsequently, the SAPD-Guided Search is performed again, yielding an estimation result that satisfies the recovery tolerance, as shown in Fig. \ref{Norm_ss_5_Target}.

Table \ref{RMSE_5_Greedy} summarizes the estimation results obtained using different initialization strategies. Since Random Initialization cannot provide an effective verification index set for the subproblem reinitialization, the SAPD-Guided Search fails to converge to the correct DOA solution, and thus its RMSE cannot be reported. In contrast, by adopting the proposed Compensation Rules, the algorithm successfully completes the DOA estimation for this scenario, achieving a final RMSE of $0.2724^{\circ}$.

These results demonstrate that, in complex multi-source scenarios, using Random Initialization to initialize the subproblem corresponding to a new verification sparsity level makes it difficult for the SAPD-Guided Search to obtain reliable DOA estimation results. In comparison, the proposed Compensation Rules can effectively supplement the missing verification grid points, thereby ensuring the correct execution of the subsequent SAPD-Guided Search.

\subsection{Sensitivity Analysis of the Recovery Error Tolerance $\epsilon$}

This subsection analyzes the sensitivity of the recovery error tolerance $\epsilon$. Since $\epsilon$ only affects the sparsity estimation process of the proposed SAPD Search Algorithm and does not directly participate in the subsequent DOA refinement, the source number estimation success rate is adopted as the evaluation metric. An estimation is regarded as successful if the estimated number of sources is equal to the true number of incident sources. Accordingly, the success rate is defined as
\begin{equation}
	SR = \frac{M_s}{M_t} \times 100\%
\end{equation}
where $M_s$ denotes the number of successful trials in the Monte Carlo simulations. The number of array sensors of the ULA is fixed at $8$.

\begin{figure}[!t]
	\centering
	\subfloat[\small]
	{\includegraphics[width=3.0 in]{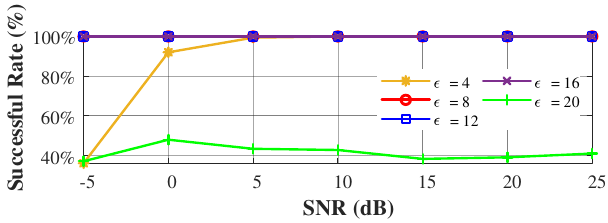}
		\label{D1_SNR_SR}} 
	\\
	\subfloat[\small]{\includegraphics[width=3.0 in]{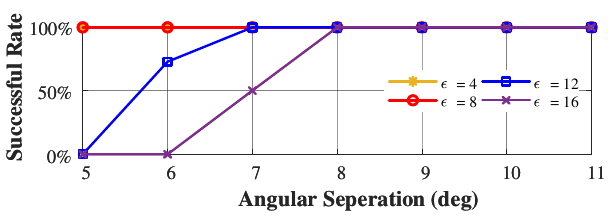}
		\label{D1_Angular_SR}} 
	\caption{Sensitivity analysis of the recovery error tolerance $\epsilon$. (a) Source number estimation success rate versus SNR. (b) Source number estimation success rate versus angular separation.}
	\label{D1}
\end{figure}

To analyze the influence of $\epsilon$ under different noise conditions, Fig. \ref{D1}(a) considers two fixed true DOAs located at $\theta_1^* = -0.12^\circ$ and $\theta_2^* = 7.88^\circ$. The SNR varies from $-5$dB to $25$dB with a step size of $5$dB. The source number estimation success rates corresponding to $\epsilon = 4, 8, 12, 16$ and $20$ are then evaluated.

As shown in Fig. \ref{D1}(a), when $\epsilon = 8, 12$, and $16$  the proposed algorithm is able to achieve stable source number estimation over the entire SNR range. In contrast, when $\epsilon = 20$, the recovery error tolerance becomes close to the recovery error corresponding to the verification sparsity level of one. As a result, the Main Loop tends to terminate prematurely, leading to a source number estimation success rate consistently below $50\%$. Conversely, when $\epsilon = 4$ the recovery error tolerance is overly restrictive. Under low-SNR conditions, the recovery error can hardly satisfy the recovery error constraint, resulting in a noticeable decrease in the source number estimation success rate.

To analyze the influence of $\epsilon$ under different angular resolution conditions, Fig. \ref{D1}(b) fixes the SNR at 15dB, while the angular separation between the two incident sources varies from $5^\circ$ to $11^\circ$ with a step size of $1^\circ$. Since $\epsilon = 20$ cannot provide stable source number estimation, the corresponding results are omitted from Fig. \ref{D1}(b).

As shown in Fig. \ref{D1}(b), when $\epsilon = 4$ and $\epsilon = 8$, the proposed algorithm is able to achieve stable source number estimation over the entire angular separation range. When $\epsilon = 12$, the source number estimation becomes unstable when the angular separation is smaller than or equal to $6^\circ$, and fails at an angular separation of $5^\circ$. For $\epsilon = 16$ unstable source number estimation begins to appear when the angular separation is smaller than or equal to $7^\circ$. It can be observed that, as the recovery error tolerance $\epsilon$ increases, the source number estimation success rate gradually decreases under small angular separation conditions.

By jointly considering Figs. \ref{D1}(a) and \ref{D1}(b), it can be observed that the selection of the recovery error tolerance $\epsilon$ should balance both the noise condition and the angular resolution. When $\epsilon$ is chosen around 8, the proposed algorithm achieves stable source number estimation performance under different experimental scenarios. Furthermore, together with the RMSE results presented in Section B, it can be concluded that the SAPD Search Algorithm is not sensitive to small variations in $\epsilon$. Therefore, this empirical parameter does not require precise tuning and can be calibrated in advance according to the signal model and the target application scenario.

\subsection{Experimental Validation with Real-World Data}

\begin{figure}[!t]
	\centering
	\includegraphics[width=2.6 in]{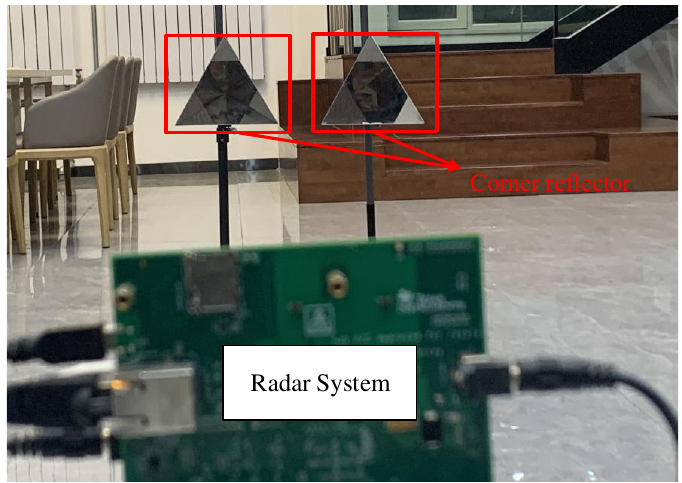}
	\caption{\centering{Experimental scenarios for hardware validation. Two corner reflectors with ground-truth DOAs at $\Delta \theta \in [-6.5^\circ, 0.5^\circ]$.}}
	\label{Hardware_SAPD_Multi_2_Scenario}
\end{figure}

\begin{figure}[!t]
	\centering
	\includegraphics[width=3.1in]{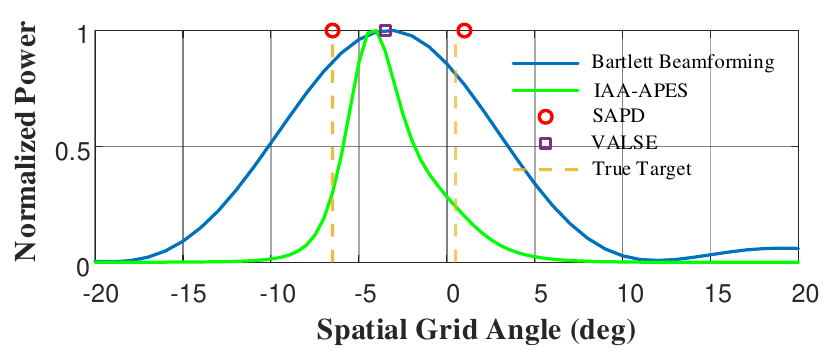}
	\caption{\centering{Normalized experimental spatial spectra for corner reflectors. Two sources at DOAs of $-6.5^\circ$ and $0.5^\circ$.}}
	\label{Hardware_SAPD_Multi_SS}
\end{figure}

In this subsection, we validate the performance of the proposed algorithm using raw radar data captured by a TI AWR1843 mmWave radar system. Corner reflectors are employed as targets and are positioned such that they fall within the same RD-cell to create a challenging super-resolution scenario, as illustrated in Fig. \ref{Hardware_SAPD_Multi_2_Scenario}. In this case, two corner reflectors are placed at $\{ -6.5^\circ, 0.5^\circ \}$. As shown in Fig. \ref{Hardware_SAPD_Multi_SS}, since the estimation results of GSM, OGSBI, GE, and ANM are close to those of the proposed SAPD algorithm, while OMP fails to effectively handle this scenario, only the estimation results of IAA-APES, VALSE, and the proposed SAPD algorithm are presented in the figure to maintain the clarity of the spatial spectra. It can be observed from Fig. \ref{Hardware_SAPD_Multi_SS} that, among the relatively low-complexity algorithms, i.e., IAA-APES, VALSE, and OMP, only the proposed SAPD algorithm can simultaneously identify the correct number of sources and achieve high-precision DOA estimation.

\section{Conclusion} \label{Sec7}

This paper addresses the problem that the use of large-scale matrix computations, such as the full overcomplete dictionary matrix and covariance-like matrices, in sparse DOA estimation leads to excessive computational complexity, making it difficult to satisfy the real-time requirements of automotive radar applications. An efficient Sparse DOA estimation method based on the SAPD Property is proposed. First, by analyzing the relationship between sparse DOA solutions and the spatial discrete grid, the SAPD Property exhibited by sparse DOA solutions on the discretized grid is revealed, and the corresponding SAPD Constraint is further constructed. Subsequently, the SAPD Constraint is incorporated into the conventional $\ell_0$-norm sparse recovery formulation, resulting in the SAPD-Constrained Sparse DOA Optimization Function. By exploiting the spatial structural information provided by the SAPD Property, the proposed optimization formulation transforms the large-scale exhaustive verification process required by conventional $\ell_0$-norm minimization due to unordered candidate solution combinations into an ordered grid search process with directional and step-size information. Based on the above optimization formulation and the SAPD Property, an SAPD Search Algorithm is further developed to efficiently solve the established optimization problem through a grid search strategy. By exploiting the SAPD Property exhibited by sparse DOA solutions, the proposed method avoids the involvement of large-scale matrix operations in the solving process, thereby maintaining low computational complexity during the search procedure. Numerical simulations and experimental results demonstrate that the proposed SAPD Search Algorithm simultaneously achieves millisecond-level computational efficiency, high-precision, and super-resolution DOA estimation. Therefore, the proposed method has considerable potential for practical applications in real-time millimeter-wave sensing systems and autonomous radar platforms.

\appendices
\section{Proof of Theorem \ref{D_Local_Minima}} \label{Proof:Th-1}

The optimization problem $(P_D)$ can be reformulated as the minimization of the penalized objective:
\begin{equation}
	\boldsymbol{x} = \mathop{\arg\min}\limits_{\boldsymbol{x}} \dfrac{1}{2} \| \boldsymbol{y} - \mathbf{A}(\boldsymbol{\vartheta}) \boldsymbol{x}(\boldsymbol{\vartheta}) \|_2^2 + \lambda \tau_k(\boldsymbol{x}(\boldsymbol{\vartheta}))
\end{equation}
where $\tau_k(\boldsymbol{x}(\boldsymbol{\vartheta})) = \sum_{i = k + 1}^N \vert \boldsymbol{x} \vert_{(i)} $ is the trimmed lasso of the ordered components $\vert \boldsymbol{x} \vert_{(1)} \geq \vert \boldsymbol{x} \vert_{(2)} \geq \cdots \vert \boldsymbol{x} \vert_{(G)}$.

For a fixed support set $\vert \mathcal{G} \vert = k$, the subproblem assumes that all components $\theta_i$ for $i \notin \mathcal{G}$ are zero, which implies $\tau(\boldsymbol{\vartheta}) = 0$. By choosing a penalty parameter $\lambda$ such that 
\begin{equation}
	\lambda \geq \| \boldsymbol{y} \|_2 \cdot \max_{i = 1, \dots, d} \| \boldsymbol{a} \|_2  
\end{equation}
the penalty term effectively enforces the cardinality constraint. Consequently, the local minimizers of the penalized objective satisfy the $k$-sparse requirement. $\hfill\blacksquare$

\makeatletter
\let\myorg@bibitem\bibitem
\def\bibitem#1#2\par{%
	\@ifundefined{bibitem@#1}{%
		\myorg@bibitem{#1}#2\par
	}{%
		\begingroup
		\color{\csname bibitem@#1\endcsname}%
		\myorg@bibitem{#1}#2\par
		\endgroup
	}%
}

\makeatother

\nocite{*}
\small
\bibliographystyle{IEEEtran}
\bibliography{Arxiv_TAES_SAPD.bib}

\end{document}